\documentclass{vldb}
\usepackage{etoolbox}
\newtoggle{expand}
\toggletrue{expand}
\usepackage{paralist}
\usepackage{graphicx}
\usepackage{balance}  

\usepackage{url}
\usepackage{framed}
\usepackage[skip=5pt,tableposition=top,figureposition=bottom,font={small},]{caption}
\usepackage[skip=0pt]{subcaption}

\hyphenpenalty=5000
\tolerance=1500

\usepackage[noend]{algpseudocode}
\usepackage{algorithm}
\usepackage{color,colortbl}

\usepackage{multirow}
\usepackage{rotating}
\usepackage{epstopdf}
\usepackage{pgfplots}
\pgfplotsset{compat=newest}  
\usepackage{tikz}
\usetikzlibrary{arrows}
\usetikzlibrary{trees}
\usetikzlibrary{pgfplots.groupplots}
\mathchardef\mhyphen="2D

\usepackage{mdwlist}







\setlength\dbltextfloatsep{4pt plus 1pt minus 3pt }
\setlength\textfloatsep{4pt plus 1pt minus 3pt}
\setlength\dblfloatsep{4pt plus 1pt minus 3pt}
\setlength\abovecaptionskip{3pt}
\setlength\belowcaptionskip{0pt}

%


%
%
%

\definecolor{lightgray}{gray}{0.8}

\def\ojoin{\setbox0=\hbox{$\bowtie$}%
  \rule[-.02ex]{.25em}{.4pt}\llap{\rule[\ht0]{.25em}{.4pt}}}
\def\leftouterjoin{\mathbin{\ojoin\mkern-5.8mu\bowtie}}

\newcommand{\nodeLabel}[1]{\ensuremath{\ifthenelse{\equal{#1}{}}{\lambda_N}{\lambda_N(#1)}}}
\newcommand{\edgeLabel}[1]{\ensuremath{\ifthenelse{\equal{#1}{}}{\lambda_E}{\lambda_E(#1)}}}
\newcommand{\graph}{\ensuremath{G=\langle N, E, \allowbreak \nodeLabel{}, \allowbreak \edgeLabel{}\rangle}}
\newcommand{\pid}[2]{\ensuremath{\mathit{pId}_{#1}(\mathit{#2})}}
\newcommand{\pidname}[1]{\ensuremath{\mathit{pId}_{#1}}}
\newcommand{\sig}[2]{\ensuremath{\mathit{sig}_{#1}(\mathit{#2})}}
\newcommand{\signame}[1]{\ensuremath{sig_{#1}}}
\newcommand{\kbisim}{$k$-\emph{bisimulation}}
\newcommand{\kbisimnormal}{$k$-bisimulation}
\newcommand{\s}{\emph{S}}
\newcommand{\pq}{\emph{pQueue}}
\newcommand{\nt}{\ensuremath{N_t}}
\newcommand{\et}{\ensuremath{E_t}}
\newcommand{\nodetable}{\nt}
\newcommand{\edgetable}{\et}

\newcommand{\ignore}[1]{}

\newcommand{\construct}{\textsc{Build\_Bisim()}}

\newcommand{\addnodes}{\textsc{Add\_Nodes()}}

\newcommand{\addedges}{\textsc{Add\_Edges()}}

\newcommand{\ptkbisim}[1]{\ensuremath{\stackrel{*}{\approx}_{#1}}} 
\newcommand{\mykbisim}[1]{\ensuremath{\approx^{#1}}} 

\newcommand{\pidt}[2]{\ensuremath{\mathit{pId_{#1\_#2}}}} 

\newlength{\leftbarwidth}
\setlength{\leftbarwidth}{1pt}
\newlength{\leftbarsep}
\setlength{\leftbarsep}{2pt}
\colorlet{leftbarcolor}{black}
\renewenvironment{leftbar}{%
    \vspace{-5pt}
    \MakeFramed {\advance \hsize -\width \FrameRestore }%
}{%
    \endMakeFramed
}

\newtheorem{theorem}{Theorem}
\newtheorem{definition}{Definition}
\newtheorem{proposition}{Proposition}
\begin{document}


\title{External memory bisimulation reduction of big graphs}


%
%
%
%

\numberofauthors{5} 

\author{
%
%
\alignauthor
Yongming Luo\\
       \affaddr{Eindhoven University of Technology}\\
       \affaddr{The Netherlands}\\
       \tt{y.luo@tue.nl}
\alignauthor
George H.\,L.\ Fletcher\\
       \affaddr{Eindhoven University of Technology}\\
       \affaddr{The Netherlands}\\
       \tt{g.h.l.fletcher@tue.nl}
\alignauthor
Jan Hidders\\
       \affaddr{Delft University of Technology}\\
       \affaddr{The Netherlands}\\
       \tt{a.j.h.hidders@tudelft.nl}
\and  
\alignauthor
Yuqing Wu\\
       \affaddr{Indiana University, Bloomington}\\
       \affaddr{USA}\\
       \tt{yuqwu@cs.indiana.edu}
\alignauthor
Paul De Bra\\
       \affaddr{Eindhoven University of Technology}\\
       \affaddr{The Netherlands}\\
       \tt{P.M.E.d.Bra@tue.nl}
}

\maketitle

\begin{abstract}

In this paper, we present, to our knowledge, the first known I/O efficient
solutions for computing the \kbisimnormal\ partition of a massive directed
graph, and performing maintenance of such a partition upon updates to the
underlying graph.  Ubiquitous in the theory and application of graph data,
bisimulation is a robust notion of node equivalence which intuitively groups
together nodes in a graph which share fundamental structural features.
\kbisimnormal\ is the standard variant of bisimulation where the topological
features of nodes are only considered within a local neighborhood of radius
$k\geqslant 0$.

The I/O cost of our partition construction algorithm is bounded
by $O(k\cdot \mathit{sort}(|\et|) + k\cdot scan(|\nt|) + \mathit{sort}(|\nt|))$, while our
maintenance algorithms are bounded by $O(k\cdot \mathit{sort}(|\et|) + k\cdot
\mathit{sort}(|\nt|))$.
The space complexity bounds are $O(|\nt|+|\et|)$ and $O(k\cdot|\nt|+k\cdot|\et|)$, resp.
Here,  $|\et|$ and $|\nt|$ are the number of disk pages occupied
by the input graph's edge set and node set, resp., and $\mathit{sort}(n)$ and $\mathit{scan}(n)$ are the cost
of sorting and scanning, resp.,  a file occupying $n$ pages in external memory.
Empirical analysis on a variety of massive real-world and synthetic graph
datasets shows that our algorithms perform efficiently in practice, scaling
gracefully as graphs grow in size.

\end{abstract}


\section{Introduction}\label{sec:intro}

Massive graph-structured datasets are becoming increasingly common in a wide
range of applications.  Examples such as social
networks, linked open data, and
biological networks have drawn much attention in both industry
and academic research.  In reasoning over graphs, a fundamental and ubiquitous
notion is that of bisimulation,  which is a characterization of when two nodes
in a graph share basic structural properties such as neighborhood
connectivity. Bisimulation arises and is widely adopted in a surprisingly large
range of research fields~\cite{Sangiorgi:2011:ATB:2103601}.  In data management,
bisimulation partitioning (i.e., grouping together bisimilar
nodes in order to reduce graph size) is often a basic step in indexing semi-structured datasets~\cite{Milo1999}, and also finds fundamental applications in RDF
\cite{saintdb2012} and general graph data (e.g.,
compression~\cite{Buneman:2003:PQC:1315451.1315465,Fan:2012:QPG:2213836.2213855},
query processing~\cite{Kaushik2002}, data analytics~\cite{Fan:2012:GPM:2274576.2274578,Tian2008}).

It is often the case that bisimulation reductions of real graphs
result in partitions which are
too refined for effective use.  Hence, a notion of localized
\kbisimnormal\ has proven to be quite successful in data management applications
(e.g., \cite{FletcherGWGBP09,Kaushik2002,QunLO03,YiHSY04}).  \kbisimnormal\ is the
variant of bisimulation where topological features of nodes are only
considered within a local neighborhood of radius $k\geqslant 0$.  With a
pay-as-you-go nature, \kbisimnormal\ is cheaper to compute and maintain, cost
adjustable, and faithfully representative of the bisimulation partition within the
local neighborhood.

\subsection*{State of the art}

Algorithms for bisimulation partitioning have been studied for
decades,  with well-known algorithms such as those of Paige and Tarjan
\cite{paige1987three} and more recent work (e.g., \cite{Dovier2004}), having
effective theoretical behavior.  

In practice, however, state-of-the-art solutions face a critical challenge:
all known approaches for computing bisimulation are internal-memory
based solutions.\footnote{With the single exception of Hellings et al.\
\cite{Hellings:2012:EEB:2213836.2213899} which we discuss below in Section 
\ref{sec:more_discussion}.} As such, their
inherently random memory access patterns do not translate to efficient I/O-bound
solutions, where it is crucial to avoid such access patterns.  Consequently,
when processing graphs which do not fit entirely in main memory 
the performance of these algorithms decreases drastically.

The reality is that, in practice, many graphs of interest are too large to be
processed in main memory.  Indeed, massive graphs are now ubiquitous
\cite{Fan:2012:GPM:2274576.2274578,Heath2011}.  Furthermore, the size of graphs
will only continue to grow as technologies for generating and capturing data
continue to improve and proliferate.  We can safely conclude that it will become
increasingly infeasible to apply existing internal-memory bisimulation partition
algorithms in practice.

To process real graphs, therefore, we must necessarily turn to either external
memory, distributed, or parallel solutions.  There has been some work on
parallel (e.g.,~\cite{Rajasekaran98,smolka95}) and distributed
(e.g.,~\cite{distributebisim}) approaches to bisimulation computation, and,
recently, external memory solutions on restricted acyclic and tree-structured
graphs~\cite{Hellings:2012:EEB:2213836.2213899}. However, to our knowledge
there is no known effective solution for computing bisimulation and \kbisimnormal\ partitions on arbitrary
graph structures in external memory.  Such an algorithm would not only enable
us to process big graphs on single machines, but also provide an essential step for
parallel and distributed solutions (e.g., MapReduce) to further scale their
performance on real graphs.

\subsection*{Our contributions}

Given these motivations, we have studied external memory solutions for reasoning
about \kbisimnormal\ on arbitrary graphs.  In this paper, we present the results
of our study, which makes the following high-level contributions.

\begin{compactitem}

  \item We present \ignore{, to our knowledge, }the first known I/O efficient external
      memory based algorithm for constructing the \kbisimnormal\ partition of a
      disk-resident graph.
The I/O cost of this algorithm is bounded
by $O(k\cdot \mathit{sort}(|\et|) + k\cdot \mathit{scan}(|\nt|) + \mathit{sort}(|\nt|))$,
with space complexity $O(|\nt|+|\et|)$,
where $|\et|$ and $|\nt|$ are the number of disk pages occupied
by the input graph's edge set and node set, resp., and $\mathit{sort}(n)$ and $\mathit{scan}(n)$ are the cost
of sorting and scanning, resp.,  a file occupying $n$ pages in external memory.

\item We present \ignore{, to our knowledge, }the first known I/O efficient
    external memory based algorithms for performing maintenance on
a disk-resident \kbisimnormal\ graph partition, with
I/O cost bounded by $O(k\cdot \mathit{sort}(|\et|) + k\cdot \mathit{sort}(|\nt|))$,
and space complexity $O(k\cdot|\nt|+k\cdot|\et|)$.

  \item We present the results of an extensive empirical analysis of our
      solutions on a variety of massive real-world and synthetic graph datasets,
      showing that our algorithms not only perform efficiently, but also scale
      gracefully as graphs grow in size.  For example, the 10-bisimulation
      partition of a graph having 1.4 billion edges can be computed with our
      solution within a day on commodity hardware, while this would take weeks,
      if not months, for a traditional in-memory algorithm to accomplish in the
      same environment.

\end{compactitem}
The rest of the paper is organized as follows. In the next section we give our
basic definitions and data structures used. We then describe in Section
\ref{sec:construction}  our solution for constructing localized bisimulation
partition.  Next, Section \ref{sec:maint} presents algorithms for keeping an
existing partition up to date, in the face of  updates to the underlying graph.
Section \ref{section:experiment} presents the results of our empirical study of
all algorithms.  We then conclude in Section \ref{sec:conclude} with a
discussion of future directions for research.

\section{Preliminaries}\label{sec:prelims}

\subsection{Data model and definitions}
Our data model is that of finite directed node- and
edge-labeled graphs
$\langle N, E, \allowbreak \nodeLabel{}, \allowbreak \edgeLabel{}\rangle$,
 where $N$ is a finite set of nodes, $E\subseteq N\times
N$ is a set of edges,
$\nodeLabel{}$ is a function from $N$ to a set of node labels $\mathcal{L}_N$, and
$\edgeLabel{}$ is a function from $E$ to a set of edge labels $\mathcal{L}_E$.

\begin{definition}
\label{def:kbisim}
Let $k$ be a non-negative integer and \graph\ be a graph.
Nodes $u,v \in N$ are called $k$-\emph{bisimilar} (denoted as $u \approx^k v$),
iff the following holds:
\begin{compactenum}
  \item
  $\nodeLabel{u} = \nodeLabel{v}$,
  \item
  if $k>0$, then $\forall u' \in N [(u,u') \in E \Rightarrow \exists v' \in N [(v,v') \in E,~u' \approx^{k-1} v'\textrm{~and~}\edgeLabel{u,u'} = \edgeLabel{v,v'}\footnotemark[1]] ]$, and
  \item
  if $k>0$, then $\forall v' \in N [(v,v') \in E \Rightarrow \exists u' \in N [(u,u') \in E,~v' \approx^{k-1} u'\textrm{~and~}\edgeLabel{v,v'} = \edgeLabel{u,u'}] ]$.
\end{compactenum}
\end{definition}

\footnotetext[1]{Note that we use \edgeLabel{u,u'}, instead of
\edgeLabel{(u,u')}, for ease of readability.}

It can be easily shown that the $k$-\emph{bisimilar} relation is an equivalence relation.

We illustrate Definition \ref{def:kbisim} with an example.  Consider the graph
given in Figure \ref{fig:example_graph}.  It is a small social network graph, in which nodes 1 and 2 are
0- and
1- bisimilar but not 2-bisimilar.

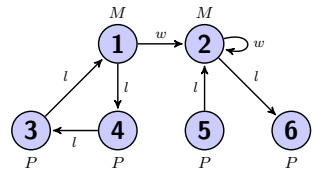
\begin{figure}[htbp]
\centering
\resizebox{.5\columnwidth}{!}{
\begin{tikzpicture}[->,>=stealth',shorten >=1pt,auto,node distance=1.7cm,
  thick,main node/.style={circle,fill=blue!20,draw,font=\sffamily\Large\bfseries}]
\centering
\useasboundingbox (-2,-2) rectangle (4,0.5);
  \node[main node] (1) [label=above:$M$] {1};
  \node[main node] (2) [label=above:$M$,right of=1] {2};
  \node[main node] (4) [label=below:$P$,below of=1] {4};
  \node[main node] (3) [label=below:$P$,left of=4] {3};
  \node[main node] (5) [label=below:$P$,below of=2] {5};
  \node[main node] (6) [label=below:$P$,right of=5] {6};
  \path[every node/.style={font=\sffamily\small}]
    (1) edge node {$w$} (2)
    (1) edge node {$l$} (4)
    (2) edge [loop right] node {$w$} (2)
    (2) edge node {$l$} (6)
    (3) edge node {$l$} (1)
    (4) edge node {$l$} (3)
    (5) edge node {$l$} (2);
\end{tikzpicture}
}
\caption{Example graph of a social network, where nodes 1 and 2 have label \emph{M} (short for ``manager''), and the other nodes
have label \emph{P} (short for ``people'').  The edge label \emph{l} is short
for ``likes'', while \emph{w} is short for ``works for''. }
\label{fig:example_graph}
\end{figure}

Recall from Section \ref{sec:intro} that our interest in this paper is in
computing the \kbisimnormal\ partition of a massive graph, and performing maintenance
on the result under updates to the original graph.  By {\em massive}, we mean
that both the set of nodes and the set of edges of the graph are too big to fit
into main memory.  By a {\em partition} of the graph, we mean an assignment of
each node $u$ of the graph to a {\em partition block}, which is the
unique subset of nodes in the graph of which the members are $k$-bisimilar to $u$.

In particular, we are interested in constructing partition ``identifiers.''

\begin{definition}
A {\em $k$-partition identifier} for graph \graph\ and $k\geq 0$ is a set of $k+1$ functions
$\mathcal{P} = \{\pidname{0}, \ldots, \pidname{k}\}$ such that, for each $0\leq i
\leq k$,
$\pidname{i}$ is a function from
$N$ to the integers, and,  for all nodes $u,v \in N$, it holds
that \pid{i}{u} = \pid{i}{v} iff $u\approx^i v$.
\end{definition}

A fundamental tool in our reasoning about \kbisimnormal\ is the notion of node
signatures.

\begin{definition}
\label{def:sig}
Let \graph\ be a graph, $k\geq 0$, and
$\mathcal{P} = \{\pidname{0}, \ldots, \pidname{k}\}$ be a $k$-partition
identifier for $G$.
The $k$ {\em bisimulation signature} of node $u \in N$
is the pair \sig{k}{u} $= (\pid{0}{u}, L)$ where:
\[
L=
\begin{cases}
\emptyset & \text{if } k=0, \\
\{(\edgeLabel{u,u'},\pid{k-1}{u'})\mid (u,u') \in E\} & \text{if } k>0.
\end{cases}
\]
\end{definition}

We then have the following fact.
\begin{proposition}
\label{prop:equal}
$\pid{k}{u}=\pid{k}{v}$ iff $\sig{k}{u}=\sig{k}{v}\allowbreak (k\ge0)$.
\end{proposition}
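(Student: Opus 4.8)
The plan is to strip the left-hand side using the defining property of a $k$-partition identifier and then verify a purely structural equivalence by unfolding definitions. By that defining property, $\pid{k}{u}=\pid{k}{v}$ holds exactly when $u\approx^{k}v$, and the analogous property at any level $i$ lets me freely translate an equality $\pid{i}{u'}=\pid{i}{v'}$ into $u'\approx^{i}v'$ and back. It therefore suffices to prove that $u\approx^{k}v$ iff $\sig{k}{u}=\sig{k}{v}$. I do not expect to need induction on $k$: the signature at level $k$ mentions only $\pidname{0}$ and $\pidname{k-1}$, whose characterizations are already supplied by the given partition identifier $\mathcal{P}$.

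The case $k=0$ is immediate, since $\sig{0}{u}=(\pid{0}{u},\emptyset)$ and so $\sig{0}{u}=\sig{0}{v}$ reduces to $\pid{0}{u}=\pid{0}{v}$, i.e.\ to $u\approx^{0}v$. For $k>0$ I first rewrite $\sig{k}{u}=\sig{k}{v}$ as the conjunction of $\pid{0}{u}=\pid{0}{v}$ with $L_u=L_v$, where $L_u=\{(\edgeLabel{u,u'},\pid{k-1}{u'})\mid (u,u')\in E\}$ and $L_v$ is defined symmetrically.

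For the forward direction I assume $u\approx^{k}v$. Condition~1 of Definition~\ref{def:kbisim} gives $\nodeLabel{u}=\nodeLabel{v}$, equivalently $u\approx^{0}v$, hence $\pid{0}{u}=\pid{0}{v}$. To obtain $L_u\subseteq L_v$, I take an arbitrary pair $(\edgeLabel{u,u'},\pid{k-1}{u'})$ coming from an edge $(u,u')\in E$; condition~2 yields a matching edge $(v,v')\in E$ with $\edgeLabel{v,v'}=\edgeLabel{u,u'}$ and $u'\approx^{k-1}v'$, and the level-$(k-1)$ identifier property rewrites the latter as $\pid{k-1}{u'}=\pid{k-1}{v'}$, placing the pair in $L_v$. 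Condition~3 gives the reverse inclusion, so $L_u=L_v$ and thus $\sig{k}{u}=\sig{k}{v}$. The backward direction simply reverses these equivalences: $\pid{0}{u}=\pid{0}{v}$ reinstates condition~1, while for each pair of $L_u$ produced by an edge $(u,u')$ the equality $L_u=L_v$ exhibits a witness edge $(v,v')$ with the correct label whose target satisfies $\pid{k-1}{v'}=\pid{k-1}{u'}$, i.e.\ $u'\approx^{k-1}v'$, reconstructing condition~2 (and condition~3 symmetrically).

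The one step that warrants care is the passage between the set equality $L_u=L_v$ and the two $\forall$-$\exists$ clauses of bisimilarity. Membership of a pair in $L_u$ only records that \emph{some} outgoing edge of $u$ produces that (edge-label, identifier) combination, so I must check that a matching pair in $L_v$ really does furnish a genuine witness edge $(v,v')$ of the required label whose target carries the required identifier, and that the identifier-to-bisimilarity translation is invoked at exactly level $k-1$. This bookkeeping is the crux; everything else is a direct unfolding of the definitions.
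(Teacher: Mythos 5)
Your proposal is correct and matches the paper's own proof in essence: both split into $k=0$ and $k>0$, use the defining property of the $k$-partition identifier to translate between $\pidname{i}$-equality and $i$-bisimilarity at exactly the levels $0$, $k-1$, and $k$, and establish the two set inclusions $L_u\subseteq L_v$ and $L_v\subseteq L_u$ by matching each outgoing edge of $u$ with a witness edge of $v$ (and conversely) via conditions~2 and~3 of Definition~\ref{def:kbisim}. The only cosmetic difference is that the paper obtains $\pid{0}{u}=\pid{0}{v}$ by invoking a separately proved monotonicity lemma ($u \approx^k v \Rightarrow u \approx^{k-1} v$, shown by induction), whereas you read $\nodeLabel{u}=\nodeLabel{v}$, i.e.\ $u\approx^0 v$, directly off condition~1 of the definition — a slight streamlining that renders the auxiliary lemma unnecessary for this proposition.
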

\iftoggle{expand}{
A proof of Proposition \ref{prop:equal} can be found in Appendix \ref{proof:equal}.
}{
Proof omitted.
}

Proposition \ref{prop:equal} is the basis of all algorithms in this paper.  The
basic idea is that a node's \kbisimnormal\ partition block can be determined by
its \kbisimnormal\ signature, which in turn is determined by the
$(k-1)$-bisimulation partition of the graph.  Intuitively, in order to compute
the \kbisimnormal\ partition, we  compute the graph's $j$-\emph{bisimulation}
($0\le j \le k$) partitions bottom-up, starting from $j=0$.  We call each
such intermediate computation the \emph{iteration $j$ computation}.

It is straightforward to show that the $k$-bisimulation partition of a graph is
unique.  Hence, in the sequel, we can safely talk about $k$-partition
identifiers as unique objects.  Also, note that we will use integer node
identifier values (denoted as $uId$) to designate nodes $u \in N$. Therefore, in
the following discussions the functions \signame{k} and \pidname{k} both could take
node identifiers (i.e., integers) as input.


\begin{table}[htbp]
\centering
\caption{\kbisim\ for the example graph in Figure \ref{fig:example_graph} ($k=0,1,2$)}
\label{table:example_table}
\footnotesize
\tabcolsep=0.1cm
\resizebox{\columnwidth}{!}{
\begin{tabular}{|c|c|c|c|c|c|}
\hline
\emph{nId} & \pid{0}{nId} & \sig{1}{nId} & \pid{1}{nId} & \sig{2}{nId} & \pid{2}{nId}\\
\hline
1 & 1 & $1,\{(w,1),(l,2)\}$   & 3 & $1,\{(w,3),(l,5)\}$ & 7\\
2 & 1 & $1,\{(w,1),(l,2)\}$   & 3 & $1,\{(w,3),(l,6)\}$ & 8\\
3 & 2 & $2,\{(l,1)\}$        & 4 & $2,\{(l,3)\}$      & 9\\
4 & 2 & $2,\{(l,2)\}$        & 5 & $2,\{(l,4)\}$      & 10\\
5 & 2 & $2,\{(l,1)\}$        & 4 & $2,\{(l,3)\}$      & 9\\
6 & 2 & $2,\{\}$            & 6 & $2,\{\}$          & 11\\
\hline
\end{tabular}
}
\end{table}

Table \ref{table:example_table} shows one way of assigning \kbisimnormal\ ($k=0,1,2$)
partition identifiers and signatures for the example graph in Figure
\ref{fig:example_graph}, where the \emph{nId} denotes the unique identifier for each node,
and \pid{i}{nId} and \sig{j}{nId} ($0~\le~i~\le~2$ and  $0~<~j~\le~2$) are presented accordingly.
For $k=0$, nodes are grouped into two partitions by node labels (given
identifiers 1 and 2).
Then for $k=1,2$, signatures are constructed according to Definition
\ref{def:sig}, and then distinct partition identifiers are assigned to distinct
signatures, following Proposition \ref{prop:equal}.

\ignore{
\subsection{The problem statement}

\kbisim\ is a localized variant of the standard notion of bisimulation, a
ubiquitous structural notion in the study of graphs.  In this paper, we develop,
to our knowledge, the first known efficient approaches to computing the \kbisim\
partition of a massive graph, and performing maintenance on the result under
updates to the original graph.  By {\em massive}, we mean that both the set of
nodes and the set of edges of the graph are too big to fit into main memory.  By
a {\em partition} of the graph, we mean an assignment of each node $u$ of the
graph to a unique {\em partition block}, which is the subset of nodes in the
graph which are $k$ bisimilar to $u$.  By {\em efficient}, we mean that our
algorithms minimize random access on disk as much as possible.  In other words,
our solutions effectively determine the $k$ bisimulation equivalence classes of
a given graph stored in external memory, and efficiently maintain these classes
as the graph evolves.
}

\subsection{Data structures}

We assume that graphs are saved on disk in the form of fixed column tables (node
set as table \nodetable\ and edge set as table \edgetable).  We also assume
that these tables can have several copies sorted on different columns.  In later
discussions, we will use the notation $X.y$ to refer to column $y$ of table $X$.

We have the following possible attributes for \nodetable:

\vspace{2pt}
\noindent
\begin{tabular}{| p{.17\columnwidth}|p{.75\columnwidth}|}
\hline
\emph{nId} & node identifier (note that this is the same as row identifier in the table;  we leave this attribute here for clarity of the discussion). \\
\hline
\emph{nLabel} & node label\\
\hline
\pidt{old}{nId} & \emph{bisimulation} partition identifier for the given \emph{nId} from last computation iteration\\
\hline
\pidt{new}{nId} & \emph{bisimulation} partition identifier for the given \emph{nId} from the current computation iteration\\
\hline
\pidt{j}{nId} & $j$ \emph{bisimulation} partition identifier for the given \emph{nId} ($j=0,1,\dots,k$)\\
\hline
\end{tabular}
\vspace{0pt}

\noindent
and for \edgetable:

\vspace{2pt}
\noindent
\begin{tabular}{ |p{.17\columnwidth} | p{.75\columnwidth}| }
\hline
\emph{sId} & source node identifier \\
\hline
\emph{tId} & target node identifier \\
\hline
\emph{eLabel} & edge label \\
\hline
\pidt{old}{tId} & \emph{bisimulation} partition identifier for the given \emph{tId} from last computation iteration \\
\hline
\end{tabular}
\vspace{0pt}

We further assume that we have a {\em signature storage facility} \s, which
stores the mapping between signatures and their corresponding partition
identifiers.  \s\ is a data structure having only one idempotent function called
\emph{S.insert()}.  For node $u\in N$, \emph{S.insert()} takes \sig{j}{u} $(0\le j \le k)$ as
input, and provides \pid{j}{u} as output.  Essentially \emph{S.insert()}
implements the one to one mapping function from $\signame{j}$ to $\pidname{j}$.  The
implementation details of \s\ will be discussed in Section~\ref{sec:more_discussion}.

For ease of discussion and investigation,
we assume in what follows that the node and edge tables are each just one file sequentially
filled with fixed length records.  Moreover, in this paper we make use of sort
merge join to the extent possible, since it is a very basic way to achieve
I/O efficient results.
However, many possibilities could be explored for implementing these data structures (e.g., indexing techniques) and join algorithms
to further optimize our presented results.
We leave such investigations open for future research.

Finally, we also assume that we have a (possibly external memory based) priority
queue available.  In our empirical study below, we use the off-the-shelf I/O efficient
priority queue implementation provided by the open source STXXL library~\cite{Dementiev2008}.

\subsection{Cost model}

Since our focus is on disk-resident datasets,
we use standard I/O complexity notions to analyze our algorithms~\cite{Aggarwal1988}.
The primary concern here is to minimize the number of I/Os needed to complete
the task at hand.

Suppose we have table $X$, space to hold $B$ disk pages in internal memory, and
$X$ occupies $|X|$ pages on disk.  In what follows, we will use the following
notation:

\begin{compactitem}

\item $sort(|X|)$ denotes the number of I/Os when sorting table $X$ on some given
    column(s). This will take $2|X|(1+\lceil
    log_{B-1}\lceil{\frac{|X|}{B}}\rceil\rceil)$ I/Os for a standard external memory based merge sort.

\item $scan(|X|)$ denotes the number of I/Os when scanning over table $X$. This will
take $|X|$ I/Os.

\end{compactitem}

\ignore{
\begin{definition}
Suppose we have table
$X$, space to hold $B$ disk pages in internal memory, and $X$ occupies $|X|$ pages on disk.

\begin{compactitem}

  \item $sort(|X|)$ denotes the number of I/Os when sorting table $X$ on some given
      column(s). This will take $2|X|(1+\lceil
      log_{B-1}\lceil{\frac{|X|}{B}}\rceil\rceil)$ I/Os for a standard external
      memory based merge sort.

  \item $scan(|X|)$ denotes the number of I/Os when scanning over table $X$. This will
  take $|X|$ I/Os.

  \item $search(|X|)$ denotes the number of I/Os when searching for some key in some
  columns of $X$. This operation's cost will vary from a constant number of I/Os
  to $|X|$ I/Os, depending on how we implement $X$, whether $X$ is sorted on the
  certain columns, whether we have an index on those columns, and whether we want
  to search for the whole result set or the first key appearance.

\end{compactitem}
\end{definition}
}




\section{Constructing localized \\\mbox{bisimulation} partitions}\label{sec:construction}

\begin{algorithm*}[htbp]
\caption{Compute the \kbisim\ equivalence classes of a graph}
\label{algo:build_k-bisims}
\begin{algorithmic}[1]
\Procedure {Build\_Bisim}{\nodetable, \edgetable, $k$}
\If {$k=0$}\label{algo:build_k-bisims_k0start}
    \State fill in the \pidt{0}{nId} and \pidt{new}{nId} columns of \nodetable \Comment{$O(sort(|\nt|))+O(scan(|\nt|))}$\label{algo:build_k-bisims_pid0}
    \State \Return (\nodetable, \edgetable)
\EndIf \label{algo:build_k-bisims_k0end}
    \State (\nodetable,\edgetable)$\gets$\Call {Build\_Bisim}{\nodetable, \edgetable, $k-1$} \Comment{$k>0$, recursive call}\label{algo:build_k-bisims_preparestart}
    \If {$k=1$}
        \State \nodetable\ $\gets$ sort(\nodetable) by \emph{nId}                  \Comment{$O(sort(|\nt|))$}
        \State \edgetable\ $\gets$ sort(\edgetable) by \emph{tId}                  \Comment{$O(sort(|\et|))$}
    \EndIf
    \State scan \nodetable, move content of column \pidt{new}{nId} to \pidt{old}{nId} \Comment{$O(scan(|\nt|))$}
    \State fill in the \pidt{old}{tId} column of \edgetable \label{algo:build_k-bisims_etabletid}\Comment{$O(scan(|\et|))+O(scan(|\nt|))$}
    \State initialize \s
    \State \emph{F} $\gets \pi_{\alpha}(\edgetable),\textrm{~where~}\alpha=(sId,eLabel,\pidt{old}{tId})$
    \State \emph{F} $\gets $ sort($F$) by $sId,eLabel,\pidt{old}{tId}$, removing duplicates \Comment{$O(sort(|\et|))$}\label{algo:build_k-bisims_prepareend}
    \For {each $uId \in \pi_{nId}$(\nodetable)}\Comment{overall $O(scan(|\et|))+O(scan(|\nt|))+cost~of~S$}
    \label{algo:build_k-bisim_simple_forstart}
        \State construct \sig{k}{uId} from $F$ \label{algo:build_k-bisims_sigk}\Comment{merge join with $F$}
        \State \pid{k}{uId} $\gets$ \emph{S.insert}(\sig{k}{uId})
        \State record \pid{k}{uId} in \nodetable.\pidt{new}{nId} where $nId=uId$
    \EndFor\label{algo:build_k-bisim_simple_forend}
\State \Return (\nodetable, \edgetable)\label{algo:build_k-bisims_kg0end}
\EndProcedure
\end{algorithmic}
\end{algorithm*}

We present our algorithm for \kbisimnormal\ partition computation in Algorithm
\ref{algo:build_k-bisims}.  The algorithm is inspired by Proposition
\ref{prop:equal}, meaning for each node in the input graph, to construct its signature and find a
one-to-one mapping number (partition identifier) for that signature.

In iteration $j=0$, we assign distinct partition identifiers to nodes based on their \emph{nLabel}s.
For other iterations $j>0$, our algorithm mainly performs
two things for each node ID $uId \in \pi_{nId}(\nt)$ (line \ref{algo:build_k-bisim_simple_forstart} to
\ref{algo:build_k-bisim_simple_forend}): (1) construct \sig{j}{uId}; and (2)
insert \sig{j}{uId} to \s, record the returning \pid{j}{uId} in the corresponding
row in \nodetable.  To prepare the necessary information for constructing
\sig{j}{uId}, we need to fill in the missing columns of \edgetable\
(line \ref{algo:build_k-bisims_preparestart} to
\ref{algo:build_k-bisims_etabletid}).  Several scans and sorts on tables are
involved for each iteration.
Note that some operations in the algorithm can be merged as one in practice.
We present them separately just to make the presentation clearer.
A detailed description is given in Section~\ref{section:algorithm_description}.

\subsection{Details of Algorithm \ref{algo:build_k-bisims} (\construct)}\label{section:algorithm_description}
\subsubsection{Input and output}
The input variables of Algorithm \ref{algo:build_k-bisims} are node table \nodetable, edge table \edgetable\ and $k$, which is the degree of local bisimilarity from Definition \ref{def:kbisim}.
The output variables are \nodetable\ and \edgetable.
The schema of \nodetable\ is (\emph{nId}, \emph{nLabel}, \pidt{0}{nId}, \pidt{old}{nId}, \pidt{new}{nId});
the schema of \edgetable\ is (\emph{sId}, \emph{eLabel}, \emph{tId}, \pidt{old}{tId}).

\newpage
\subsubsection{\boldmath{$k = 0$}, line \ref{algo:build_k-bisims_k0start} to \ref{algo:build_k-bisims_k0end}}

According to Definition \ref{def:kbisim}, $k = 0$ means nodes having the same labels should
be assigned the same partition identifier. We achieve this by sorting the
\nodetable\ on \emph{nLabel} column.
When scanning \nt, for each new \emph{nLabel} we encounter, we assign a new integer (e.g., a
predefined counter) to the corresponding \emph{nId}, filling it in the
\pidt{0}{nId} and \pidt{new}{nId} columns. This will take $O(sort(|\nt|)) + O(scan(|\nt|))$ I/Os.  Note
that since \pidt{0}{nId} can be assigned during the last step of the sorting
process, the scanning cost $O(scan(|\nt|))$ can be omitted.
Also note that one alternative way of assigning $pId_0$ is to use a hash map.
We can create a hash map using \emph{nLabel} as the keys, and $pId_0$ as the values.
The upper bound of this method is the same as the one we present here. 

\begin{leftbar}
\begin{algorithmic}
\State details of line \ref{algo:build_k-bisims_pid0} of Algorithm \ref{algo:build_k-bisims}:
\State sort \nodetable\ by \emph{nLabel} \Comment{$O(sort(|\nt|))$}
\State create variable \emph{current\_pId}
    \ForAll {$(nId,nLabel,\pidt{0}{nId},\pidt{old}{nId},\allowbreak \pidt{new}{nId})$\par
\hskip\algorithmicindent $\in \nodetable$} \Comment{$O(scan(|\nt|))$}
        \If {\emph{nLabel} is new}
            \State \emph{current\_pId} $\gets$ request a new \emph{pId}
        \EndIf
        \State save \emph{current\_pId} to \pidt{0}{nId} and \pidt{new}{nId}
    \EndFor
\end{algorithmic}
\end{leftbar}

\subsubsection{\boldmath{$k > 0$}, line \ref{algo:build_k-bisims_preparestart} to \ref{algo:build_k-bisims_kg0end}}

For $k>0$, we first perform a recursive call to the algorithm, ensuring we work in a bottom-up manner.
For iteration 1 ($k=1$), we sort \nodetable\ and \edgetable\ on \emph{nId} and \emph{tId}, preparing them for later merge join operations.
The algorithm's idea is to construct the signature of each node in order to
distinguish it from other nodes according to the $k$-bisimilar relation.
If we can properly fill in the \pidt{old}{tId} column of \edgetable, and join it with \nodetable\ on \emph{nId=sId},
the information combined
from columns $\{\pidt{0}{nId}, eLabel, \pidt{old}{tId}\}$ is enough for constructing the signature.
The column \emph{eLabel} is already filled in before algorithm starts.
The column \pidt{0}{nId} is filled in during iteration 0 (line \ref{algo:build_k-bisims_k0start}
to \ref{algo:build_k-bisims_k0end}).
The column \pidt{old}{tId} is filled in during each iteration $j>0$ (line \ref{algo:build_k-bisims_etabletid}).
Then for each node ID $uId \in \nodetable$, we get its \sig{k}{uId}, insert it to \s\ in an I/O
efficient way, getting $pId_k(uId)$ in return,  and then placing this value in the
\pidt{new}{nId} column of \nodetable.

%

At line \ref{algo:build_k-bisims_etabletid} of Algorithm \ref{algo:build_k-bisims}, to fill in the \pidt{old}{tId} column of
\edgetable, we conduct a sort merge join of \edgetable\ and \nodetable\ (since both tables are sorted properly in iteration 1), replacing the content of \pidt{old}{tId} in \edgetable\ with \pidt{old}{nId} in \nodetable.

\begin{leftbar}
\begin{algorithmic}
\State details of line \ref{algo:build_k-bisims_etabletid} of Algorithm \ref{algo:build_k-bisims}:
\State \edgetable $\gets \pi_{\alpha} (\edgetable \Join_{\phi} \nodetable)$
\Comment{merge join of \et\ and \nt} \label{algo:build_k-bisim_sort_merge}
\item[] $\alpha:$(\edgetable.\emph{sId}, \edgetable.\emph{eLabel}, \edgetable.\emph{tId}, \nodetable.\pidt{old}{nId})
\item[] $\phi:$\edgetable.\emph{tId} = \nodetable.\emph{nId}
\end{algorithmic}
\end{leftbar}

At line \ref{algo:build_k-bisims_sigk} of Algorithm \ref{algo:build_k-bisims},
we sequentially construct the signature \sig{k}{uId} for each $uId \in \pi_{nId}(\nodetable)$ according to Definition \ref{def:sig}, and get the corresponding \pid{k}{uId} (using \emph{S.insert()}).
All \pid{k}{uId} will be written back to the \pidt{new}{nId} column of \nodetable\ (where \emph{nId}=\emph{uId}) right after, so that there is no random access to \nt.
Note that although by definition $sig_k$ is a set, we construct \sig{k}{uId} as a string, maintaining elements of the set in sorted order.
It is both an easy way for storing a set and handy for implementing \s\ later on (e.g., using a trie).

\begin{leftbar}
\begin{algorithmic}
\State details of line \ref{algo:build_k-bisims_sigk} of Algorithm \ref{algo:build_k-bisims}:
\State create string \sig{k}{uId} $\gets$ \pid{0}{uId} \Comment{overall scan \nodetable}
\If{$uId \in \pi_{sId}(F)$}
    \For {each (\emph{uId}, \emph{eLabel}, \pidt{old}{tId}) $\in$ \emph{F}}
\item[] \Comment{sequentially scan \emph{F}}
    \State \sig{k}{uId} $\gets$ \sig{k}{uId}+(\emph{eLabel}, \pidt{old}{tId})
    \EndFor 
\EndIf

\end{algorithmic}
\end{leftbar}

\subsection{Further discussion of Algorithm \ref{algo:build_k-bisims}}
\label{sec:more_discussion}
\paragraph*{Example run}

If we assume the numbering scheme for \s\ is a self-increased counter across
iterations, Table \ref{table:example_table} would be the intermediate results for
running Algorithm \ref{algo:build_k-bisims} on the example graph in Figure
\ref{fig:example_graph} ($k=2$), and Table \ref{table:example_output} gives the
final output of the algorithm.

\begin{table}[htbp]
\centering
\tabcolsep=0.1cm
\caption{Output of Algorithm \ref{algo:build_k-bisims} on example graph in Figure \ref{fig:example_graph} ($k=2$)
\label{table:example_output}
}
\vspace{-7pt}
\begin{subtable}[t]{.60\columnwidth}
\centering
\caption{\nodetable}
\label{table:node_table_output}
\resizebox{\textwidth}{!}{
\begin{tabular}{|c|c|c|c|c|}
\hline
\emph{nId} & \emph{nLabel} & \pidt{0}{nId} & \pidt{old}{nId} & \pidt{new}{nId} \\ \hline
1 & \emph{M} & 1 & 3 & 7\\
2 & \emph{M} & 1 & 3 & 8\\
3 & \emph{P} & 2 & 4 & 9\\
4 & \emph{P} & 2 & 5 & 10\\
5 & \emph{P} & 2 & 4 & 9\\
6 & \emph{P} & 2 & 6 & 11\\
\hline
\end{tabular}
}
\end{subtable}
~
\begin{subtable}[t]{.35\columnwidth}
\centering
\caption{\edgetable}
\label{table:edge_table_output}
\resizebox{\textwidth}{!}{
\begin{tabular}{|c|c|c|c|}
\hline
\emph{sId} & \emph{eLabel} & \emph{tId} & \pidt{old}{tId} \\ \hline
3 & \emph{l} & 1 & 3 \\
1 & \emph{w} & 2 & 3 \\
2 & \emph{w} & 2 & 3 \\
5 & \emph{l} & 2 & 3 \\
4 & \emph{l} & 3 & 4 \\
1 & \emph{l} & 4 & 5 \\
2 & \emph{l} & 6 & 6 \\
\hline
\end{tabular}
}
\end{subtable}
\end{table}

\paragraph*{Early stopping condition}
It is not always necessary to let the algorithm run $k$ iterations.
Indeed, it can be shown
\iftoggle{expand}{
(referring to Section \ref{sec:stopcondition} in Appendix)
}{
(proof omitted)
}that after a bounded number of computation iterations, Algorithm \ref{algo:build_k-bisims} would achieve the full (i.e.,
classical non-localized) bisimulation partition.
We could detect this by simply checking the partition size each iteration produces. If two consecutive iterations produce the same number of partition
blocks, this means that the algorithm already achieves the full bisimulation partition, and therefore it is safe to terminate the algorithm.

\paragraph*{Numbering schemes of partition identifier and S}

In the algorithm, the correctness of the partition identifiers' assignment is
guaranteed level by level, meaning that the partition block numbering scheme from iteration $j$
has nothing to do with that of iteration $j+1$, for example. This means that we could use one
counter for the whole computation, or could use different counters for each
computation iteration.

The same idea also applies for implementing \s. As long as \s\ returns distinct
\emph{pId}s for different signatures for each computation iteration, it is
immaterial to the work performed by Algorithm \ref{algo:build_k-bisims}
if \s\ is a new one for each iteration or not. So,
we could use one \s\ for all iterations (when we have a global counter), to
reuse some signature \emph{pId} across iterations.  Furthermore, in practice
there could potentially be benefits from warm caching (get a better hit ratio)
for this approach.  Moreover, for the maintenance algorithms presented in Section
\ref{sec:maint}, we would only need to
store one \s\ instead of $k$ of them.  Essentially if the same signature appears
many times in different iterations, we only save it once in \s.  The drawback of
this method is that the size of \s\ will keep increasing as the algorithm runs.
This issue becomes acute when the number of partitions becomes large and the signatures
are long, as we observed in some datasets presented in Section \ref{sec:build_experiment}.

\paragraph*{Data structures for S}
The signature storage facility \s\ clearly plays an
important role in Algorithm \ref{algo:build_k-bisims}.  In principle, any data
structure that permits an efficient set-equality check will be sufficient.  Trie
and dictionary are such data structures, for instance.  During our experiments,
we see that in many of the cases, partition sizes are small and the signatures
are short, for which a main memory based data structure is enough.  In other cases,
signature length could reach several million and partition size
into tens of millions, then we need some external memory based
solution for \s.
We could, for example, sort all signatures from \emph{F} in an I/O efficient way~\cite{Arge1997},
then when scanning these signatures, partition identifiers are assigned.
In this case, the overall cost of the \emph{S.insert()} operation could still be bounded by $O(\mathit{sort}(|\et|))$.
Other disk based solutions, such as disk-based tries (e.g., String B-Tree \cite{FerraginaG99} or \cite{Roberto12}) or inverted files
(e.g., \cite{Mamoulis2003}) could also be considered.

In our experiments we use BerkeleyDB (B-Tree or Hash index) to mimic a trie, which,
as we show in the experimental results, has acceptable empirical behavior.

\paragraph*{Complexity and correctness}
We have the following characterization of Algorithm \ref{algo:build_k-bisims}.
\begin{theorem}
\label{theory:construct}
Let $k\geq 0$ and \graph\ be a graph.
Algorithm \ref{algo:build_k-bisims} computes the k-bisimulation partition
of $G$ with I/O complexity of $O(k\cdot \mathit{sort}(|\et|) + k\cdot \mathit{scan}(|\nt|) + \mathit{sort}(|\nt|))$,
and space complexity of $O(|\nt|+|\et|)$.
\end{theorem}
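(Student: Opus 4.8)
The plan is to establish correctness and the two complexity bounds separately, handling correctness first since it is essentially a bookkeeping argument layered on top of Proposition \ref{prop:equal}.

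For correctness I would argue by induction on the iteration index $j$ that, after the iteration-$j$ computation, the \pidt{new}{nId} column of \nt\ holds a correct $j$-partition identifier, i.e.\ \pidt{new}{nId}$(u)=$\pidt{new}{nId}$(v)$ iff $u\approx^{j}v$. The base case $j=0$ is immediate from Definition \ref{def:kbisim}: the algorithm assigns equal identifiers exactly to nodes sharing an \emph{nLabel}. For the inductive step I would assume the claim for $j-1$, so that after the column shift \pidt{old}{nId} stores a correct $(j-1)$-partition identifier. I then verify that $F$, obtained by filling \pidt{old}{tId} through the merge join and projecting onto $(sId,eLabel,\pidt{old}{tId})$ with duplicate removal, encodes for each source node $u$ exactly the set $\{(\edgeLabel{u,u'},\pid{j-1}{u'})\mid(u,u')\in E\}$ of Definition \ref{def:sig}. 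Prefixing this set with \pid{0}{u} yields precisely \sig{j}{u}, so the identifiers returned by \emph{S.insert()} satisfy the $j$-partition-identifier property by Proposition \ref{prop:equal}. Taking $j=k$ gives the correctness claim.

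For the I/O bound I would sum the per-line costs already annotated in Algorithm \ref{algo:build_k-bisims}, carefully separating operations paid once from those paid once per iteration. The key structural observation is that \nt\ is sorted only at $j=0$ (on \emph{nLabel}) and at $j=1$ (on \emph{nId}), and thereafter stays sorted on \emph{nId} across all iterations; likewise \et\ is sorted on \emph{tId} only once, at $j=1$. I would establish this sort order as an invariant preserved by each iteration, so that both the join filling \pidt{old}{tId} and the main signature-construction loop are genuine merge joins over already-sorted inputs, each costing only $O(\mathit{scan}(|\et|)+\mathit{scan}(|\nt|))$ rather than incurring a re-sort. The only per-iteration sort is that of the derived table $F$, at cost $O(\mathit{sort}(|\et|))$, together with the cost of \s, which I would bound by $O(\mathit{sort}(|\et|))$ via the external-memory signature-sorting argument of Section \ref{sec:more_discussion}. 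Summing over $j=0,\dots,k$ then collapses to $O(k\cdot \mathit{sort}(|\et|)+k\cdot \mathit{scan}(|\nt|)+\mathit{sort}(|\nt|))$, where the lone $\mathit{sort}(|\nt|)$ term (not multiplied by $k$) records exactly that \nt\ is sorted only a constant number of times.

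For the space bound I would observe that at any moment the algorithm holds only the two tables \nt\ and \et\ (each with a constant number of columns), the derived table $F$ with $|F|\le|\et|$, and the store \s. Since the total length of all node signatures in one iteration is bounded by the per-node \pidname{0} values plus the sum of out-degrees, it is $O(|\nt|+|\et|)$, so \s\ fits within the same bound when a fresh store is used per iteration, giving overall space $O(|\nt|+|\et|)$. I expect the main obstacle to be the I/O accounting rather than correctness: one must argue rigorously that the sort orders of \nt\ and \et\ are invariants that every iteration preserves (so the joins never silently force a re-sort), and one must justify the $O(\mathit{sort}(|\et|))$ bound on \emph{S.insert()} uniformly, since a naive in-memory realization of \s\ would not respect the external-memory cost model. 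Pinning down the signature-length bound that underlies both the \s\ cost and the space claim is the crux.
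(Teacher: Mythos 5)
Your proposal is correct and follows essentially the same route as the paper's proof: correctness by induction on the iteration index via Proposition \ref{prop:equal}, the I/O bound by charging each iteration $O(\mathit{sort}(|\et|))+O(\mathit{scan}(|\nt|))$ plus the one-time sort of \nt, and the space bound by observing that \s\ is dominated by the total signature length, which is $O(|\nt|+|\et|)$ since signatures are assembled from \nt\ and the projection $F$ of \et. You in fact make explicit two points the paper's terse proof leaves implicit --- the sort-order invariant on \nt\ and \et\ that keeps the joins merge-only, and the $O(\mathit{sort}(|\et|))$ accounting for \emph{S.insert()} (which the paper delegates to Section \ref{sec:more_discussion}) together with the per-iteration reading of \s\ needed for the space claim --- so your write-up is a faithful, slightly more rigorous rendering of the same argument.
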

\iftoggle{expand}{
  A proof can be found in Appendix \ref{proof:construct}.
}{
Proof omitted.
}

\paragraph*{Differences with Hellings et al}
As indicated in Section \ref{sec:intro}, the only known solutions for computing
bisimulation on graphs in external memory are those of
Hellings et al.\ \cite{Hellings:2012:EEB:2213836.2213899}.
There are two critical differences between their work and ours.
(1) {\em Targeting different problems.} The solutions of Hellings et al.\ are designed
specifically for the special case of acyclic graphs. Our approach does
not rely on such structure, computing bisimulation regardless of the presence or
absence of cycles in the graph.
(2) {\em Using different techniques.}  Hellings
et al.\ compute partition blocks level by level, starting from the leaf nodes of
the graph.   Our approach constructs all partition blocks at each iteration,
using data structures and processing strategies which are not tied to any (a)cyclic
structure in the graph.
In particular, the techniques of Hellings et al.\ do not generalize to
graphs having cyclic structure.

\section{Maintenance of localized \\\mbox{bisimulation} partitions}\label{sec:maint}
It is easy to show that any edge and node updates on a graph can potentially
change the complete \kbisimnormal\ partition of the graph. Therefore, in the worst case, the lower bound of such maintenance
cost is the cost of recomputing the \kbisimnormal\ partition from scratch.
However, when dealing with real graphs, as we shall see in Section \ref{section:experiment},
in many cases there is still hope to use data structures such as \s\ and priority queue to maintain the correct partition result instead of recomputing everything.
 In this section we propose several algorithms for this purpose.

For maintenance algorithms we assume that we have constructed the \kbisimnormal\
partition of graph \graph, where, as before,  $G$'s \nodetable\ and
\edgetable\ are stored on disk, containing the historical
information kept in \nodetable\ (Table \ref{table:node_table2});
\edgetable\ is the same as in Algorithm \ref{algo:build_k-bisims}, but
has two copies with sort orders (\emph{sId,tId}) and (\emph{tId,sId}) to boost performance.
We use \edgetable$_{st}$ and \edgetable$_{ts}$ to refer to each of these copies.

\begin{table}[htbp]
\centering
\caption{\nodetable\ for maintenance algorithms}
\label{table:node_table2}
\footnotesize
\tabcolsep=0.1cm
\begin{tabular}{|l|l|l|l|l|l|}
\hline
\emph{nId} & \emph{nLabel} & \pidt{0}{nId} & \pidt{1}{nId} & $\dots$ & \pidt{k}{nId} \\ \hline
~ & ~ & ~ & ~ & ~ & ~\\
\hline
\end{tabular}
\end{table}

We further assume that we save the signature storage facility \s\ on
disk, which we use and update throughout the maintenance process.

The maintenance problem includes the following subproblems.

\paragraph*{Change \boldmath{$k$}} If $k$ increases, we carry out another iteration of
computation. If $k$ decreases, the result can be returned directly since we keep the
history information in \nodetable.


%

\begin{algorithm*}[htbp]
\caption {Add a set of new nodes to existing \kbisim\ partition}
\label{algo:add_nodes}
\begin{algorithmic}[1]
\Procedure {Add\_Nodes}{\nodetable, \s, \emph{newNodes}, $k$}
\Comment{\emph{newNodes} is a table of new nodes}
\State \nodetable $\gets$ sort(\nodetable) by $nLabel$ \Comment{$O(\mathit{sort}(|\nt|))$}
\State \emph{newNodes} $\gets$ sort(\emph{newNodes}) by $nLabel$ \Comment{$O(\mathit{sort}(|\nt|))$}
\State \emph{newNodes} $\gets \pi_{\alpha}(\mathit{newNodes} \leftouterjoin_{\phi} (\nodetable))$, remove duplicates \Comment{$O(\mathit{scan}(|\nt|))$}
\item[] $\alpha:$(\emph{newNodes.nId}, \emph{newNodes.nLabel},
\nodetable.\pidt{0}{nId}, \dots)
\item[] $\phi:newNodes.nLabel = \nodetable.nLabel$
\State request a new \emph{pId} for each new \emph{nLabel} in \emph{newNodes}, 
fill in all the NULL fields in \emph{newNodes}.\pidt{0}{nId}
\For {each $uId \in \pi_{nId}(\mathit{newNodes})$} \Comment{overall $O(\mathit{scan}(|\nt|))$ + cost of \s}
    \State get value of \emph{S.insert}(\pid{0}{uId}), use it for 
    $\pidt{1}{nId},\allowbreak \dots,\pidt{k}{nId}$ of \emph{uId}
\EndFor
\State append \emph{newNodes} to \nodetable
\State \Return (\nodetable, \s)
\EndProcedure
\end{algorithmic}
\end{algorithm*}


\paragraph*{Add a set of new nodes (\addnodes)}
When adding a set of new nodes, we assume the new nodes are isolated, stored in the \emph{newNodes} table, which
has the same schema as \nodetable, and
that $|\mathit{newNodes}|=O(|\nodetable|)$.
We first sort \nodetable\ and \emph{newNodes} by
\emph{nLabel}, then perform a merge join on the \emph{nLabel} column to fill in the
\pidt{0}{nId} column of \emph{newNodes} for all the existing \emph{nLabel}. For
the missing ones, we request a new \emph{pId} for each of the new \emph{nLabel}.
Then we get the $pId_1,\ldots,pId_k$ of the \emph{newNodes} by inserting its
$pId_0$ to \s.  At the end we append the whole \emph{newNodes} to \nodetable.
The I/O complexity of \addnodes\ is bounded by $O(sort(|\nt|))$.  Pseudo code is in Algorithm
\ref{algo:add_nodes}.

\paragraph*{Add a set of new edges (\addedges)}

For adding a set of edges, we assume that the edges are added between existing nodes.
If this is not the case, we first call procedure \addnodes.
The new edges are stored in the \emph{newEdges} table, having the same schema as \et.
For inserting one edge $(s,l,t)$ to $G$, the potential changes are to \sig{j}{s} $(1\le j \le k)$, as well
as those signatures of all ancestors of $s$ within $k$ steps.
So the main work is to detect whether there is some change in \sig{j}{s} and propagate
those change(s) to its parent nodes' signatures in later iterations.  We use a
priority queue \textbf{\pq}\ to record and process such changes in a systematic,
level-wise manner.
For some node ID \emph{uId} and
iteration \emph{j}, \pq\ stores the pair \emph{(j,uId)} as priority reference.
Then whenever we dequeue one element from
\pq, we get the smallest node ID from the lowest iteration (lowest priority
reference). Therefore
\pq\ indicates those nodes whose signatures could change in each iteration level
(from 1 up to $k$).

At the beginning of the algorithm, we enqueue $(j,s)$ to \pq\ $(\forall (s,l,t)\in \et, 0 < j \le k)$.
Then, while \pq\ is not empty, we dequeue the list of $(j,uId)$ pairs with the
same $j$ out of the queue, construct the new signature of each such \emph{uId}, insert it to
\emph{S}, and compare the returning \pid{j}{uId} with the old \pidt{j}{nId} value of \emph{uId}.  If the \emph{pId}
remains the same as the old one, we continue; if it changes, we record
\pid{j}{uId} in \nodetable, and enqueue all $(j+1,\mathit{vId})$ pairs to \pq\ where \emph{vId}~$\in
\pi_{sId}(\sigma_{tId=uId}(\edgetable))$. Pseudo code is given in Algorithm
\ref{algo:add_edge}, and a detailed discussion is in Section
\ref{section:edge_algo_description}.

\paragraph*{Deletions}

Deletions follow a similar idea to insertions.
For example, when removing an edge $(s,l,t)$, it is the same idea as adding one.
We also (potentially) modify the
signature of $s$, propagating changes to its ancestors via \pq, then the reasoning is the same.
When removing a node, we first remove each incoming edge and each outgoing edge for that node. Then we
remove the node from the node table.

\begin{algorithm*}[thbp]
\caption{Add a set of new edges to existing \kbisim\ partition}
\label{algo:add_edge}
\begin{algorithmic}[1]
\Procedure {Add\_Edges}{\nodetable, \edgetable$_{st}$, \edgetable$_{ts}$, \s, $newEdges, k$}
\Comment{$newEdges$ is a table of new edges}
\If {$k=0$}\label{algo:add_edge_k0start}
  \State merge \emph{newEdges} into \edgetable$_{st}$ and \edgetable$_{ts}$\Comment{$O(\mathit{sort}(|\et|))$}\label{algo:add_edge_k0end}
\Else                                  \Comment{$k>0$} \label{algo:add_edge_kge1start}
  \State \nodetable $\gets $ sort(\nodetable) by \emph{nId}                  \Comment{$O(\mathit{sort}(|\nt|))$}
  \State create empty priority queue \pq \Comment{overall $O(\mathit{sort}(|\nt|))$}
  \For {$j \in \{1,\ldots,k\}$ and $(s,l,t) \in$ \emph{newEdges}}\label{algo:add_edge_enqueue_start}
    \State enqueue $(j,s)$ to \pq \label{algo:add_edge_oldsig}
  \EndFor\label{algo:add_edge_enqueue_end}
  \State merge \emph{newEdges} into \edgetable$_{st}$ and \edgetable$_{ts}$, fill in the \pidt{old}{tId} column\Comment{$O(\mathit{sort}(|\et|))$}
  \While {\pq\ is not empty}
    \State dequeue all pairs $(j,uId)$ from \pq\ with the same (i.e., smallest) $j$ value, save all distinct \emph{uId} to \emph{M}
    \State \emph{F} $\gets \sigma_{sId \in \mathit{M}}$(\edgetable$_{st})$ \Comment{merge join, $O(\mathit{scan}(|\nt|)+\mathit{scan}(|\et|))$}
    \State fill in the \pidt{old}{tId} column of $F$ \label{algo:add_edge_filltid}\Comment{$O(\mathit{scan}(|\nt|)+O(\mathit{sort}(|\et|))+O(\mathit{scan}(|\et|)))$}
    \State $\mathit{H} \gets \pi_{\alpha}(\mathit{F})$, where $\alpha$=(\emph{sId}, \emph{eLabel}, \pidt{old}{tId})
    \State \emph{H} $\gets$ sort \emph{H} on \emph{sId}, \emph{eLabel}, \pidt{old}{tId}, and remove duplicates\Comment{$O(\mathit{sort}(|\et|))$}
    \ForAll {\emph{uId} $\in$ \emph{M}}\Comment{scan \emph{M}, \nodetable\ and \emph{H}, overall $O(\mathit{scan}(|\nt|))+O(\mathit{scan}(|\et|))$ + cost of \s}
        \State construct \sig{j}{uId} from \emph{H} \label{algo:add_edge_sig}
        \State \pid{j}{uId} $\gets$ \emph{S.insert}(\sig{j}{uId})
        \If {\pid{j}{uId} is not the same as the corresponding value in \nodetable.\pidt{j}{nId}}
            \State propagate changes to \nodetable\ and \pq \label{algo:add_edge_propagate}\Comment{$O(\mathit{scan}(|\nt|))+O(\mathit{scan}(|\et|))$}
        \EndIf
    \EndFor
  \EndWhile\label{algo:add_edge_kge1end}
\EndIf
\State \Return (\nodetable, \edgetable$_{st}$, \edgetable$_{ts}$, \s)
\EndProcedure
\end{algorithmic}
\end{algorithm*}

\newpage

\subsection{Details of Algorithm \ref{algo:add_edge} (\addedges)}\label{section:edge_algo_description}

\subsubsection{Input and output}

The input variables of Algorithm \ref{algo:add_edge} are node table \nodetable,
edge tables \edgetable$_{st}$ and \edgetable$_{ts}$, the signature storage facility \s, the new edge set \emph{newEdges} and $k$.
The output variables of Algorithm \ref{algo:add_edge} are \nodetable,
\edgetable$_{st}$, \edgetable$_{ts}$ and \s.  \nodetable's schema is given in Table
\ref{table:node_table2}, while \edgetable$_{st}$, \edgetable$_{ts}$ and \emph{newEdges}'s schema is the same as \edgetable\ in Algorithm
\ref{algo:build_k-bisims}.


\subsubsection{\boldmath{$k=0$}, line \ref{algo:add_edge_k0start} to
\ref{algo:add_edge_k0end} of Algorithm \ref{algo:add_edge}}

For $k=0$, since all nodes' information is properly filled (including the
\pidt{0}{nId} column) in \nodetable\, we only need to add new rows to \edgetable$_{st}$ and \edgetable$_{ts}$ according to \emph{newEdges}.

\subsubsection{\boldmath{$k>0$}, line \ref{algo:add_edge_kge1start} to
\ref{algo:add_edge_kge1end} of Algorithm \ref{algo:add_edge}}

For $k>0$, for each iteration, which is indicated by $j$ in the algorithm,
we need to
(1) find out the potential nodes whose signatures could have changed;
(2) check whether these signatures have been changed or not;
and, (3) propagate any such changes to the parents of these nodes.
To record the potential nodes and to perform the propagation, we use a priority queue \pq.
\pq\ takes $(j,\mathit{nId})$ as the element and priority reference, where $j$ is the
iteration level and \emph{nId} is the node identifier.
To check signature changes, we reuse the signature storage facility \s.

When adding a new edge $(s,l,t) \in newEdges$ to the graph, all \sig{j}{s} $(j>0)$ have
the potential to change, and hence we add all pairs $(j,s)$, for $j \in \{1,\ldots,k\}$, to \pq,
indicating that we need to check the signature of $s$ in every iteration (line
\ref{algo:add_edge_enqueue_start} to \ref{algo:add_edge_enqueue_end}).
For each iteration $j>0$, we dequeue from \pq\ all node IDs in the smallest iteration $j$, remove duplicates, and save them to a
temporary table \emph{M}, so that \emph{M} contains in sorted order all node IDs whose signatures would change in iteration $j$.
Then we create an extra table \emph{F}, preparing for signature constructions.
This is achieved by performing a merge join of \edgetable$_{st}$\ and \emph{M} (where \edgetable$_{st}.sId
\in \mathit{M}$).
Then we fill in $\mathit{F}.\pidt{old}{tId}$ column, as in Algorithm \ref{algo:build_k-bisims}.

\begin{leftbar}
\begin{algorithmic}
\State details of line \ref{algo:add_edge_filltid} of Algorithm \ref{algo:add_edge}:
\State $\mathit{F} \gets \mathit{sort(F)}$ by $tId$  \Comment{$O(\mathit{sort}(|\et|))$}
\State $\mathit{F} \gets \pi_{\alpha}(\mathit{F} \Join_{\phi} \nodetable)$  \Comment{$O(\mathit{scan}(|\et| + |\nt|))$}
\item[] $\alpha:$ (\emph{F.sId}, \emph{F.eLabel}, \emph{F.tId}, \nodetable.\pidt{(j-1)}{nId})
\item[] $\phi:$ \emph{F.tId} = \nodetable.\emph{nId}
\end{algorithmic}
\end{leftbar}

After projection on the (\emph{sId}, \emph{eLabel}, \pidt{old}{tId}) of \emph{F} and removing
duplicates, we get \emph{H}, and are ready to construct the signatures.  For each
\emph{uId} $\in$ \emph{M}, we construct \sig{j}{uId} according to the signature definition.
The idea of constructing the nodes' signatures is the same as line \ref{algo:build_k-bisims_sigk} of Algorithm \ref{algo:build_k-bisims},
only in this case we are not considering every node
but only those appearing in \pq\ (and later in \emph{M}).

We then call \s.\emph{insert(\sig{j}{uId})} for all such $\mathit{uId}$. If \s\ returns the same \pid{j}{uId} as
recorded in \nt.\pidt{j}{nId}, nothing will happen; otherwise we change the \nt.\pidt{j}{nId} entry of \emph{uId} accordingly, and propagate the changes to \pq.
If $j < k$, we add
all parents of \emph{uId} to \pq\ to indicate that we will check these nodes' signatures
in the $j+1$ iteration. This is achieved by a merge join with \edgetable$_{ts}$.

\begin{leftbar}
\begin{algorithmic}
\State details of line \ref{algo:add_edge_propagate} of Algorithm \ref{algo:add_edge}:
\State record the new \pid{j}{uId} in the corresponding row in \nodetable
\item[] \Comment{overall $O(\mathit{scan}(|\nt|))$}
\If {$j < k$}
    \State $I \gets \sigma_{tId=uId}$(\edgetable$_{ts})$ \Comment{overall $O(\mathit{scan}(|\et|))$}
    \ForAll {(\emph{sId}, \emph{eLabel}, \emph{tId}, \pidt{old}{tId}) $\in$ \emph{I}} \item[] \Comment{overall $O(\mathit{scan}(|\et|))$}
        \State enqueue $(j+1,\mathit{sId})$ to \pq
    \EndFor
\EndIf
\end{algorithmic}
\end{leftbar}


\paragraph*{Complexity and correctness}
We have the following characterization of Algorithm \ref{algo:add_edge}.
\begin{theorem}
\label{theory:update}
Let \graph\ be a graph and $k\geq 0$.  After adding a set of new edges to $G$, Algorithm
\ref{algo:add_edge} correctly updates the \kbisimnormal\ partition of $G$ with I/O
complexity of $O(k\cdot \mathit{sort}(|\et|) + k\cdot \mathit{sort}(|\nt|))$,
and space complexity of $O(k\cdot|\nt|+k\cdot|\et|)$.
\end{theorem}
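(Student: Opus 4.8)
The plan is to establish three claims—correctness, the $O(k\cdot\mathit{sort}(|\et|)+k\cdot\mathit{sort}(|\nt|))$ I/O bound, and the $O(k\cdot|\nt|+k\cdot|\et|)$ space bound—treating correctness as the substantial part. Throughout, let $G$ denote the original graph and $G'$ the graph after the edge insertions, with all $\sig{j}{\cdot}$ and $\pid{j}{\cdot}$ referring to $G'$. The engine of the correctness argument is Proposition~\ref{prop:equal} combined with the fact that \s\ realizes an injective map from signatures to identifiers: because we persist and reuse the single \s\ that produced the original partition, for every node $u$ the value $\pid{j}{u}$ returned during maintenance equals its pre-update value if and only if $\sig{j}{u}$ is unchanged from $G$ to $G'$. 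This reduces correctness to showing that Algorithm~\ref{algo:add_edge} recomputes $\sig{j}{u}$ for at least every $u$ whose level-$j$ signature actually changes; recomputing a superset is harmless, since comparison of the returned identifier against \nt.\pidt{j}{nId} suppresses spurious propagation.

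I would prove this by induction on the iteration level $j$ carried by the while loop, using the invariant: immediately before level $j$ is processed, columns $\pidt{0}{nId},\dots,\pidt{(j-1)}{nId}$ of \nt\ are correct for $G'$, and the pairs currently in \pq\ with first component $j$ project onto a superset of the nodes whose level-$j$ signature differs between $G$ and $G'$. The structural fact driving the invariant is that, by Definition~\ref{def:sig} and the observation that node labels (hence $\pidt{0}{nId}$) are untouched by edge insertion, $\sig{j}{u}$ can differ from its old value only if (i) $u$ acquired an outgoing edge, i.e.\ $u$ is a source in \emph{newEdges}, or (ii) some out-neighbor $w$ of $u$ has a changed $\pid{(j-1)}{w}$. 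Case (i) is covered by the initial enqueuing of $(j,s)$ for every $j\in\{1,\dots,k\}$ and every new edge (lines~\ref{algo:add_edge_enqueue_start}--\ref{algo:add_edge_enqueue_end}); case (ii) is covered inductively, because a level-$(j-1)$ change at $w$ is detected while processing level $j-1$ and triggers the propagation step (line~\ref{algo:add_edge_propagate}), which enqueues $(j,\cdot)$ for every parent of $w$ through the merge join with \edgetable$_{ts}$. The base case $j=1$ holds since only sources of new edges can alter a level-1 signature. Soundness then follows from the invariant: nodes absent from $M$ at level $j$ have unchanged signatures and retain correct identifiers, whereas nodes in $M$ are recomputed from the inductively correct $\pid{(j-1)}{\cdot}$ of their out-neighbors and assigned correct identifiers via \s. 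The $k=0$ branch is trivially correct, as it only appends edges without touching any identifier.

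For the I/O bound, the key observation is that the while loop executes at most $k$ times: \pq\ is keyed on $(j,uId)$ with $j$ primary, each iteration dequeues all pairs sharing the current minimum $j$, and fresh enqueues only ever use the value $j+1$, so the processed $j$-values strictly increase within $\{1,\dots,k\}$ (propagation halts at $j=k$). Each iteration performs a constant number of scans, sorts, and sort-merge joins over (copies of) \nt\ and \et---forming $M$, building $F$ and filling its \pidt{old}{tId} column, projecting and sorting into $H$, and batching the identifier writes and parent enqueues---each step costing $O(\mathit{sort}(|\et|)+\mathit{sort}(|\nt|))$, since scans are dominated by sorts. Adding the one-time $O(\mathit{sort}(|\nt|))$ pre-sort and the $O(\mathit{sort}(|\et|))$ merge of \emph{newEdges} gives the claimed bound. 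The space bound follows by accounting: \nt\ carries $k+1$ identifier columns ($O(k\cdot|\nt|)$), the two sorted copies of \et\ cost $O(|\et|)$, \pq\ holds at most one pair per node per level ($O(k\cdot|\nt|)$), and the persisted \s\ stores distinct signatures of total length $O(k\cdot|\et|)$ across the $k$ levels.

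I expect the completeness half of correctness—verifying that initial seeding together with change-triggered propagation enqueues every node whose level-$j$ signature is genuinely altered—to be the main obstacle, since that is where the level-wise priority-queue discipline must be shown to mirror exactly the bottom-up dependence of a signature on its out-neighbors' identifiers. The supporting subtlety that must not be glossed over is the persistence and injectivity of \s, which is precisely what lets ``unchanged signature'' stand in faithfully for ``unchanged identifier'' and hence justifies halting propagation whenever the recomputed identifier matches the stored one.
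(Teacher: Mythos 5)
Your proof is correct and follows essentially the same route as the paper's: the same two-case analysis of how \sig{j}{u} can change (a new pair arising from a new outgoing edge, covered by the initial enqueuing at lines \ref{algo:add_edge_enqueue_start}--\ref{algo:add_edge_enqueue_end}, versus a changed \pid{j-1}{w} of a child $w$, covered by the propagation at line \ref{algo:add_edge_propagate}), combined with the same per-level $O(\mathit{sort}(|\et|)+\mathit{sort}(|\nt|))$ accounting over at most $k$ strictly increasing queue levels and the same space bookkeeping for \nt\ and \s. Your explicit inductive invariant and the remark on the persistence and injectivity of \s\ (so that an unchanged identifier faithfully certifies an unchanged signature) merely make rigorous what the paper's terser proof leaves implicit.
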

\iftoggle{expand}{
  A proof can be found in Appendix \ref{proof:update}.
}{
Proof omitted.
}

\subsection{Further discussion of Algorithm \ref{algo:add_edge}}
\paragraph*{Example run}
We present different behaviors of Algorithm \ref{algo:add_edge} using two
examples.  Here we will extend the graph from Figure \ref{fig:example_graph}
as in Figure \ref{fig:new_example_update}.  The
dashed lines in this figure indicate the two edges which we will add in
our examples.

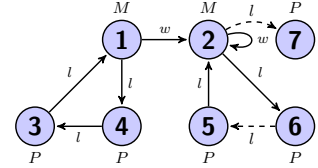
\begin{figure}[htbp]
\centering
\resizebox{.5\columnwidth}{!}{
\begin{tikzpicture}[->,>=stealth',shorten >=1pt,auto,node distance=1.7cm,
  thick,main node/.style={circle,fill=blue!20,draw,font=\sffamily\Large\bfseries}]
\centering
\useasboundingbox (-2,-2) rectangle (4,1);
  \node[main node] (1) [label=above:$M$] {1};
  \node[main node] (2) [label=above:$M$,right of=1] {2};
  \node[main node] (4) [label=below:$P$,below of=1] {4};
  \node[main node] (3) [label=below:$P$,left of=4] {3};
  \node[main node] (5) [label=below:$P$,below of=2] {5};
  \node[main node] (6) [label=below:$P$,right of=5] {6};
  \node[main node] (7) [label=above:$P$,right of=2] {7};

  \path[every node/.style={font=\sffamily\small}]
    (1) edge node {$w$} (2)
    (1) edge node {$l$} (4)
    (2) edge [loop right] node {$w$} (2)
    (2) edge node {$l$} (6)
    (3) edge node {$l$} (1)
    (4) edge node {$l$} (3)
    (5) edge node {$l$} (2);
  \draw[dashed] (2) edge [bend left] node {$l$} (7);
  \draw[dashed] (6) edge node {$l$} (5);

\end{tikzpicture}
}
\caption{Updates on the example graph}
\label{fig:new_example_update}
\end{figure}

First suppose we add edge $(2,l,7)$ to the original graph of Figure
\ref{fig:example_graph}, where node 7 is a new node with label \emph{P}.  Table
\ref{table:example_update1_table} shows the resulting partition after this
insertion.  The new/changed part of the table is indicated in gray.  When the
algorithm starts, (2,1) and (2,2) are added to \pq. Then after checking each of
these, the algorithm finds no change in node 2's signature, therefore no change
propagates, and the algorithm stops.  We see that comparing with Table
\ref{table:example_table}, the only thing that changes is to add one more row
(node 7) to the table.
Since node 7 does not have outgoing edges,
adding one edge that points into node 7 will not change any existing nodes's signature.
Node 7 belongs to the group of node 6, and no other node changes group membership.

In the second case, suppose we add edge $(6,l,5)$ to the original graph of
Figure \ref{fig:example_graph}.  The algorithm first add (6,1) and (6,2) to \pq.
Then in iteration 1, the algorithm detects that the signature of node 6 does
change, and therefore adds one new pair (2,2) to \pq. In iteration 2, both node 2 and
node 6's signatures are checked, and they are both changed.  We see that in
Table \ref{table:example_update2_table} \pid{2}{1} and \pid{2}{2} become the
same, while \pid{2}{6} changes from 11 to 10.

\begin{table}[thbp]
\centering
\caption{$2$-\emph{bisimulation} for the example graph after edge insertion $(2,l,7)$}
\label{table:example_update1_table}
\footnotesize
\tabcolsep=0.1cm
\resizebox{\columnwidth}{!}{
\begin{tabular}{|c|c|c|c|c|c|}
\hline
\emph{nId} & \pid{0}{nId} & \sig{1}{nId} & \pid{1}{nId} & \sig{2}{nId} & \pid{2}{nId}\\
\hline
1 & 1 & $1,\{(w,1),(l,2)\}$   & 3 & $1,\{(w,3),(l,5)\}$ & 7\\
2 & 1 & $1,\{(w,1),(l,2)\}$   & 3 & $1,\{(w,3),(l,6)\}$ & 8\\
3 & 2 & $2,\{(l,1)\}$        & 4 & $2,\{(l,3)\}$      & 9\\
4 & 2 & $2,\{(l,2)\}$        & 5 & $2,\{(l,4)\}$      & 10\\
5 & 2 & $2,\{(l,1)\}$        & 4 & $2,\{(l,3)\}$      & 9\\
6 & 2 & $2,\{\}$            & 6 & $2,\{\}$          & 11\\
\rowcolor{lightgray} 7 & 2 & $2,\{\}$            & 6 & $2,\{\}$          & 11\\
\hline
\end{tabular}
}
\end{table}

\begin{table}[thbp]
\centering
\caption{$2$-\emph{bisimulation} for the example graph after edge insertion $(6,l,5)$}
\label{table:example_update2_table}
\footnotesize
\tabcolsep=0.1cm
\resizebox{\columnwidth}{!}{
\begin{tabular}{|c|c|c|c|c|c|}
\hline
\emph{nId} & \pid{0}{nId} & \sig{1}{nId} & \pid{1}{nId} & \sig{2}{nId} & \pid{2}{nId}\\
\hline
1 & 1 & $1,\{(w,1),(l,2)\}$   & 3 & $1,\{(w,3),(l,5)\}$ & 7\\
2 & 1 & $1,\{(w,1),(l,2)\}$   & 3 & \cellcolor{lightgray}$1,\{(w,3),(l,5)\}$ & \cellcolor{lightgray}7\\
3 & 2 & $2,\{(l,1)\}$        & 4 & $2,\{(l,3)\}$      & 9\\
4 & 2 & $2,\{(l,2)\}$        & 5 & $2,\{(l,4)\}$      & 10\\
5 & 2 & $2,\{(l,1)\}$        & 4 & $2,\{(l,3)\}$      & 9\\
6 & 2 & \cellcolor{lightgray}$2,\{(l,2)\}$        & \cellcolor{lightgray}5 & \cellcolor{lightgray}$2,\{(l,4)\}$      & \cellcolor{lightgray}10\\
\hline
\end{tabular}
}
\end{table}

\paragraph*{When to switch back to Algorithm~\ref{algo:build_k-bisims}}
As we will see in our empirical study (Section \ref{sec:add_edges}), it is not always beneficial to use Algorithm~\ref{algo:add_edge}, since it performs extra work in each iteration.
 Heuristics could be adopted to decide when to switch back to Algorithm~\ref{algo:build_k-bisims}.
For example, if at a certain iteration, most of the nodes are placed into \pq, it is more beneficial to switch back to Algorithm~\ref{algo:build_k-bisims}.
This could be done by simply checking the size of \pq\ at the beginning of each iteration.

\vspace{-0.5pt}

\section{Empirical Analysis}\label{section:experiment}

In this section we present the results of an in-depth experimental study of our
algorithms.  After introducing our set-up, we first discuss a validation of the
correctness of our algorithms by several experiments. We then show the
performance of the algorithms on both synthetic and real datasets. In these
experiments, various
aspects of the algorithms are investigated while other settings are fixed.

\subsection{Experiment setting}

\paragraph*{Environment}
The following experiments are run on
a machine with 2.27~GHz Intel Xeon (L5520, 8192KB cache) processor, 12GB main
memory, running Fedora 14 (64-bit) Linux.  We use C++ to implement all the algorithms,
using GCC 4.4.4 as the compiler. We use the open-source STXXL library \cite{Dementiev2008} to
construct the tables and perform the external memory sorting, and use Berkeley DB to
implement \s.
One \s\ is used for all computation iterations (as discussed in Section \ref{sec:more_discussion}).
In the experiments we do not exploit any parallelism and restrain ourselves with
predefined buffer sizes.  We record the running time as well as the I/O volume
between the buffer and the disk system.  Therefore, the performance (time) of
the experiments are comparable to a commodity PC, and the I/O volume can
be repeated on other systems.  In the following experiments, we set both the
STXXL buffer and Berkeley DB buffer to be 128MB, if not otherwise indicated.
Please note that we run experiments for the Twitter dataset on a different
machine (Intel Xeon E5520, 2.27 GHz, 8192KB cache, 70G main memory, same OS) for limited
disk space reason,  using a 512MB/512MB buffer setting.

\paragraph*{Datasets}

To prove the practicability of the algorithms, we experiment with various graph
datasets. The datasets are collected from public repositories, ranging from synthetic
data to real-world data, from several million of edges to more than 1.4 billion
edges. In Table \ref{table:datasets} we give a description of the datasets, as
well as some simple statistics of them. All datasets are accessed on 15 May
2012.
Note that due to space limitation, in the following we show the experiment
results on a subset of the datasets when the result is representative enough.

\begin{table}[htbp]
\centering
\caption{Description and statistics of the experiment datasets}
\label{table:datasets}
\scriptsize
\tabcolsep=0.1cm
\resizebox{\columnwidth}{!}{
\begin{tabular}{|r|p{2.7cm}|r|r|r|}
  \hline
  Data Name & Description & Node Count & Edge Count & Label on \\
  \hline \hline
  Jamendo & A repository of music metadata in RDF format\footnotemark[1]  & 486,320 & 1,049,647 & Edge \\
  \hline
  LinkedMDB & A repository of movie metadata in RDF format~\cite{HassanzadehC09} & 2,330,695 & 6,147,996 & Edge \\
  \hline
  DBLP & An RDF format DBLP dump\footnotemark[2] & 23,000,670 & 50,203,406 & Edge \\
  \hline
  WikiLinks & A page-to-page linking graph of Wikipedia\footnotemark[3]  & 5,710,993 & 130,160,392 & None \\
  \hline
  DBPedia & An early RDF dump of DBPedia\footnotemark[4] & 38,615,135 & 115,305,444 & Edge \\
  \hline
  Twitter & A following relationship graph of Twitter~\cite{Kwak10www} & 41,652,230 & 1,468,365,182 & None \\
  \hline
  SP2B & A RDF data generator for arbitrarily large DBLP-like data~\cite{Schmidt2009}  & 280,908,393 & 500,000,912 & Edge \\
  \hline
  BSBM & A RDF data generator for e-commerce use case~\cite{bizer2009berlin} & 8,886,078 & 34,872,182 & Edge \\
  \hline
\end{tabular}
}
\end{table}

\footnotetext[1]{\url{http://dbtune.org/jamendo/}}
\footnotetext[2]{\url{http://thedatahub.org/dataset/l3s-dblp}}
\footnotetext[3]{\url{http://haselgrove.id.au/wikipedia.htm}}
\footnotetext[4]{\url{http://www.cs.vu.nl/~pmika/swc/btc.html}}
\footnotetext[5]{\url{http://snap.stanford.edu/data/index.html}}

\begin{figure*}[htbp]
\centering
\begin{subfigure}[t]{0.3\textwidth}
\raggedleft
\begin{tikzpicture}[baseline]
	\begin{semilogyaxis}[
tiny,
xlabel=Iteration,
ylabel=Partition count,
grid=major,
xtick={1,2,3,4,5,6,7,8,9,10},
height=5cm,width=5cm,
legend columns=8,
legend entries={Jamendo,LinkedMDB,DBLP,WikiLinks,DBPedia,Twitter,BSBM,SP2B},
legend to name=bignamed,
]
	\addplot table[x=Iteration,y=jamendo] {figures/partition_count.dat};
    \addplot table[x=Iteration,y=linkedmdb] {figures/partition_count.dat};
    \addplot table[x=Iteration,y=dblp] {figures/partition_count.dat};
    \addplot table[x=Iteration,y=wikilinks] {figures/partition_count.dat};
    \addplot table[x=Iteration,y=dbpedia] {figures/partition_count.dat};
    \addplot table[x=Iteration,y=twitter] {figures/partition_count.dat};
    \addplot table[x=Iteration,y=bsbm] {figures/partition_count.dat};
    \addplot table[x=Iteration,y=sp2b] {figures/partition_count.dat};
    \end{semilogyaxis}
\end{tikzpicture}
\caption{Number of partition blocks for every iteration}
\label{fig:partition_count}
\end{subfigure}
~
\begin{subfigure}[t]{0.3\textwidth}
\raggedleft
\begin{tikzpicture}[baseline]
	\begin{semilogyaxis}[
tiny,
xlabel=Iteration,
ylabel=Max signature length (integer count),
grid=major,
xtick={1,2,3,4,5,6,7,8,9,10},
height=5cm,width=5cm,
]
	\addplot table[x=Iteration,y=jamendo] {figures/max_signature.dat};
    \addplot table[x=Iteration,y=linkedmdb] {figures/max_signature.dat};
    \addplot table[x=Iteration,y=dblp] {figures/max_signature.dat};
    \addplot table[x=Iteration,y=wikilinks] {figures/max_signature.dat};
    \addplot table[x=Iteration,y=dbpedia] {figures/max_signature.dat};
    \addplot table[x=Iteration,y=twitter] {figures/max_signature.dat};
    \addplot table[x=Iteration,y=bsbm] {figures/max_signature.dat};
    \addplot table[x=Iteration,y=sp2b] {figures/max_signature.dat};
	\end{semilogyaxis}
\end{tikzpicture}
\caption{Maximum length of signatures for every iteration}
\label{fig:max_signature}
\end{subfigure}
~
\begin{subfigure}[t]{0.3\textwidth}
\raggedleft
\begin{tikzpicture}[baseline]
	\begin{semilogyaxis}[
tiny,
xlabel=Iteration,
ylabel=I/O on STXXL (byte),
grid=major,
xtick={1,2,3,4,5,6,7,8,9,10},
height=5cm,width=5cm,
]
	\addplot table[x=Iteration,y=jamendo] {figures/stxxl_io.dat};
    \addplot table[x=Iteration,y=linkedmdb] {figures/stxxl_io.dat};
    \addplot table[x=Iteration,y=dblp] {figures/stxxl_io.dat};
    \addplot table[x=Iteration,y=wikilinks] {figures/stxxl_io.dat};
    \addplot table[x=Iteration,y=dbpedia] {figures/stxxl_io.dat};
    \addplot table[x=Iteration,y=twitter] {figures/stxxl_io.dat};
    \addplot table[x=Iteration,y=bsbm] {figures/stxxl_io.dat};
    \addplot table[x=Iteration,y=sp2b] {figures/stxxl_io.dat};
	\end{semilogyaxis}
\end{tikzpicture}
\caption{I/O spent on sort/scan (STXXL) for every iteration}
\label{fig:stxxl_io}
\end{subfigure}
~
\begin{subfigure}[t]{0.3\textwidth}
\raggedleft
\begin{tikzpicture}[baseline]
	\begin{semilogyaxis}[
tiny,
xlabel=Iteration,
ylabel=I/O on BDB (byte),
grid=major,
xtick={1,2,3,4,5,6,7,8,9,10},
height=5cm,width=5cm,
]
	\addplot table[x=Iteration,y=jamendo] {figures/bdb_io.dat};
    \addplot table[x=Iteration,y=linkedmdb] {figures/bdb_io.dat};
    \addplot table[x=Iteration,y=dblp] {figures/bdb_io.dat};
    \addplot table[x=Iteration,y=wikilinks] {figures/bdb_io.dat};
    \addplot table[x=Iteration,y=dbpedia] {figures/bdb_io.dat};
    \addplot table[x=Iteration,y=twitter] {figures/bdb_io.dat};
    \addplot table[x=Iteration,y=bsbm] {figures/bdb_io.dat};
    \addplot table[x=Iteration,y=sp2b] {figures/bdb_io.dat};
	\end{semilogyaxis}
\end{tikzpicture}
\caption{I/O spent on \s\ (BDB) for every iteration}
\label{fig:bdb_io}
\end{subfigure}
~
\begin{subfigure}[t]{0.3\textwidth}
\raggedleft
\begin{tikzpicture}[baseline]
	\begin{semilogyaxis}[
tiny,
xlabel=Iteration,
ylabel=Time on signature preparation (s),
grid=major,
xtick={1,2,3,4,5,6,7,8,9,10},
height=5cm,width=5cm,
]
	\addplot table[x=Iteration,y=jamendo] {figures/prepare_time.dat};
    \addplot table[x=Iteration,y=linkedmdb] {figures/prepare_time.dat};
    \addplot table[x=Iteration,y=dblp] {figures/prepare_time.dat};
    \addplot table[x=Iteration,y=wikilinks] {figures/prepare_time.dat};
    \addplot table[x=Iteration,y=dbpedia] {figures/prepare_time.dat};
    \addplot table[x=Iteration,y=twitter] {figures/prepare_time.dat};
    \addplot table[x=Iteration,y=bsbm] {figures/prepare_time.dat};
    \addplot table[x=Iteration,y=sp2b] {figures/prepare_time.dat};
	\end{semilogyaxis}
\end{tikzpicture}
\caption{Time spent on signature preparation for every iteration}
\label{fig:prepare_time}
\end{subfigure}
~
\begin{subfigure}[t]{0.3\textwidth}
\raggedleft
\begin{tikzpicture}[baseline]
	\begin{semilogyaxis}[
tiny,
xlabel=Iteration,
ylabel=Time on signature construction (s),
grid=major,
xtick={1,2,3,4,5,6,7,8,9,10},
height=5cm,width=5cm,
]
	\addplot table[x=Iteration,y=jamendo] {figures/signature_time.dat};
    \addplot table[x=Iteration,y=linkedmdb] {figures/signature_time.dat};
    \addplot table[x=Iteration,y=dblp] {figures/signature_time.dat};
    \addplot table[x=Iteration,y=wikilinks] {figures/signature_time.dat};
    \addplot table[x=Iteration,y=dbpedia] {figures/signature_time.dat};
    \addplot table[x=Iteration,y=twitter] {figures/signature_time.dat};
    \addplot table[x=Iteration,y=bsbm] {figures/signature_time.dat};
    \addplot table[x=Iteration,y=sp2b] {figures/signature_time.dat};
	\end{semilogyaxis}
\end{tikzpicture}
\caption{Time spent on signature construction and insertion for every iteration}
\label{fig:signature_time}
\end{subfigure}
\ref{bignamed}
~
\caption{Experiment results for Algorithm \ref{algo:build_k-bisims} for real and synthetic datasets ($k=10$)}
\label{fig:big_figure}
\end{figure*}
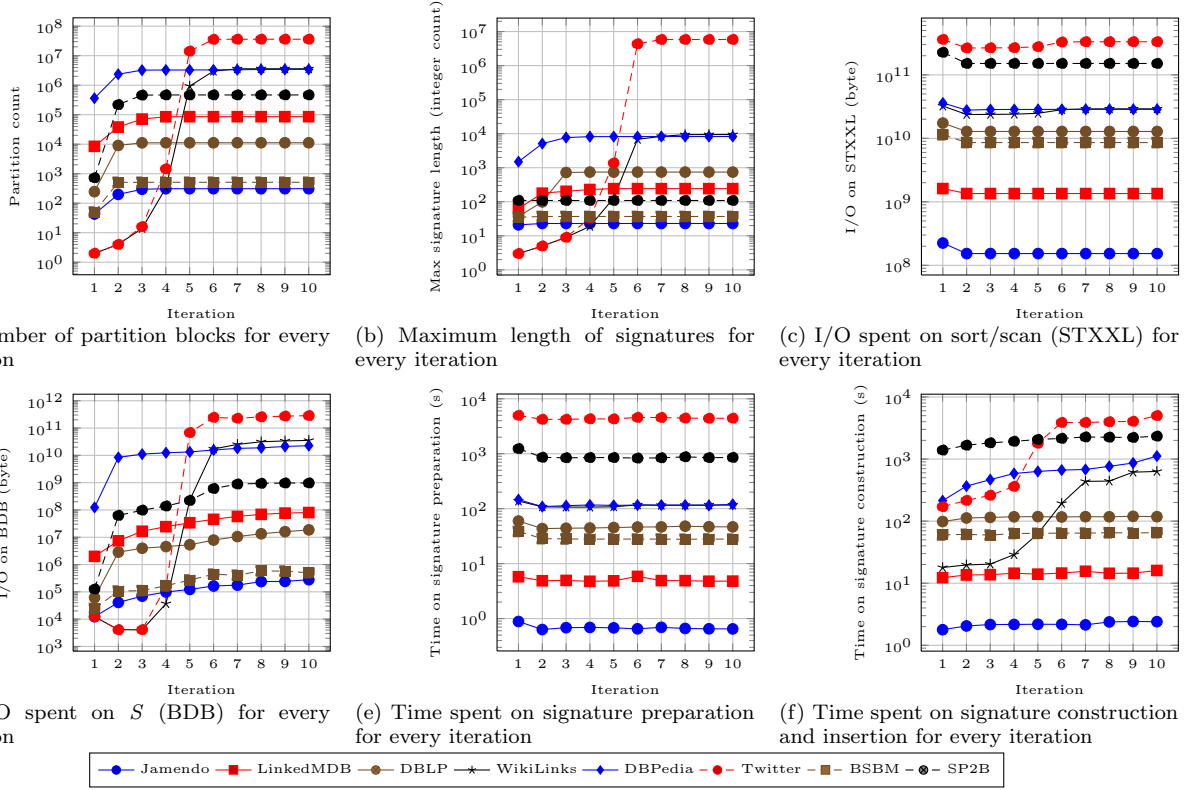

\paragraph*{Validation of implementations}

We validate the correctness of the implementation of our algorithms by comparing
their output against that of other existing solutions.  The first algorithm is the
classic bisimulation algorithm from paper \cite{smolka95}, which computes the
full bisimulation of the graph. We implement this algorithm using Python, run it on
small datasets from the Stanford Large Network Dataset Collection\footnotemark[5]
(p2p-Gnutella04, 05, 06, 08), and
 compare the output with Algorithm \ref{algo:build_k-bisims} while setting
$k$ to 100. The two algorithms produce the same partitions.

We also validate Algorithm \ref{algo:build_k-bisims} against Hellings et al.\
\cite{Hellings:2012:EEB:2213836.2213899}. Since we could handle any type of
directed graph, we can also handle acyclic graphs. We use the random DAG
generator provided along with \cite{Hellings:2012:EEB:2213836.2213899} to
generate several graphs and test them on both algorithms. They produce the same
partition results as expected.

Furthermore, we validate the algorithm \addedges\ against the algorithm
\construct.  In this experiment, for the same dataset, we first compute the
\kbisimnormal\ partitions using \construct; then split the dataset into two parts,
using the first part as the building block, and the second part as the edges to
be updated, applying \addedges\ on the second part.  Both algorithms produce the
same results.


\subsection{Experiments on the localized bisimulation construction algorithm (\construct)}
\label{sec:build_experiment}

In Figure \ref{fig:big_figure} we show the experiment results for Algorithm
\ref{algo:build_k-bisims} on all datasets.  We compute the 10-bisimulation
(i.e., $k=10$) of these datasets, and measure many aspects of the running behavior
for each iteration.  Concerning time measurement, we run every experiment 5
times and take the average number. \s\ uses BerkleyDB's B-Tree index in this experiment.
\iftoggle{expand}{
  Readers can find detailed numbers from these experiments in Table \ref{table:big_table},
  found in the Appendix.
}{
}

In Figure \ref{fig:partition_count}, we show the number of partition blocks every
iteration produces for all datasets.  We see that the numbers vary from one dataset to
another, where the difference is sometimes more than an order of magnitude, and
interestingly, does not directly relate to the size of the dataset.  In certain
cases (e.g., Twitter) partition size is quite large.  Moreover, many of the
datasets (e.g., Jamendo, LinkedMDB, DBLP, etc.) reach full bisimulation after 5
iterations. In fact, all datasets (including Twitter) get sufficient partition
result after 5 iterations of computation. Here we can reasonably argue that even for Twitter
dataset, the partition results after 5 iterations are too refined (e.g.,
(partition count)/(node count) $> 0.8$).

Figure \ref{fig:max_signature} shows the maximum length of signatures for each
iteration.  We observe that the signature length is usually quite short,
especially comparing with the size of the graph.
 But there are still cases
(e.g., Twitter) that the signature becomes very long (more than 1
million integers), which stresses the need for an I/O efficient solution for \s.
Note that the synthetic datasets, such as BSBM and SP2B, reach their full bisimulation partition after 3 iterations of computations,
and have rather short signatures, indicating that they are highly structured.

Figures \ref{fig:stxxl_io} and \ref{fig:bdb_io} show the I/O volume spent on
sorting/scanning (STXXL) and on interacting with \s\ (Berkeley DB).  We see for
most of the datasets, there is no dramatic change cross different iterations.  But
for Wikilinks and Twitter, the two datasets which have very few partition
blocks at the beginning and many at the end, there is a noticeable difference on \s\
for different iterations.  In this case I/O on \s\ becomes a comparable factor
with sort and scan (I/O on STXXL).

Figure \ref{fig:prepare_time} shows the time spent on preparing the signature
(line \ref{algo:build_k-bisims_preparestart} to
\ref{algo:build_k-bisims_prepareend} in Algorithm \ref{algo:build_k-bisims}) for
each iteration, which is quite stable for all datasets.  Figure
\ref{fig:signature_time} shows the time on constructing the signature and insert
into \s\ (line \ref{algo:build_k-bisim_simple_forstart} to
\ref{algo:build_k-bisim_simple_forend} in Algorithm \ref{algo:build_k-bisims}).
In this case datasets with higher degrees tend to cost more time in later
iterations, which correlate with their longer signatures and larger number of partition blocks.
For all datasets, however, the operations on constructing and looking
for signature are the dominant factor for each iteration. This brings us to
think about further optimization tasks on construction of signature and
implementation of \s.

We can conclude that the algorithm is practical to use. It can process a graph
with 100 million edges (e.g., WikiLinks and DBPedia) in under 700 seconds for one
iteration, and performance scales (almost) linearly with the number of nodes and edges.


\subsubsection{Different implementations of \s}

As we mentioned in Section \ref{sec:more_discussion}, \s\ could be implemented in several ways.  In Figure
\ref{fig:btree_hash} we compare the overall I/O performance of \construct\ using
B-Tree and Hash indexes for \s\ on several datasets.  We see that the B-Tree
implementation slightly outperforms Hash Index for all datasets. This is most
likely due to small caching effects and locality of references during construction of
the signatures.

\usetikzlibrary{patterns}

\begin{figure}[htbp]
\centering
\begin{tikzpicture}
	\begin{semilogyaxis}[
tiny,
ylabel=Total I/O (byte),
symbolic x coords={Jamendo,LinkedMDB,DBLP,WikiLinks,DBPedia,BSBM,SP2B},
xtick = data,
x tick label style={rotate=25,anchor=east},
legend pos=north west,
height=4cm,width=8cm,
ybar,
grid=major,
]
	\addplot[draw=black,pattern color=blue,pattern=crosshatch dots] table[x=dataset,y=hash,] {figures/btree_hash.dat};
	\addplot[draw=black,pattern color=gray,pattern=horizontal lines] table[x=dataset,y=btree,] {figures/btree_hash.dat};
    \legend{Hash,B-Tree}
	\end{semilogyaxis}
\end{tikzpicture}
\caption{I/O comparison for B-Tree and Hash index of \s\ ($k=10$)}
\label{fig:btree_hash}

\end{figure}
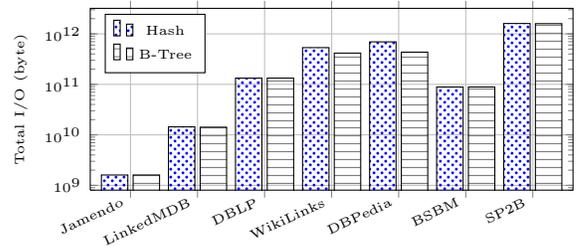

\subsubsection{The effect of different buffer sizes}

We allocate two buffers, one for scan and sort (STXXL buffer in our case), one for \s\ (BerkeleyDB buffer in our
case),
in order to analyze the impact of buffer size on our algorithms.
To illustrate, we take the DBPedia dataset since it is
large enough to show  buffer effects.
For the sort/scan setting, we set the
buffer size ranging from 16MB to 512MB, while keeping the \s\ buffer to
128MB, recording the I/O between the buffer and the disk system. From Figure
\ref{fig:table_buffer} we see that bigger buffer does improve the
performance. But since we only gain in the external memory sorting part, a
certain amount of I/Os is inevitable for each iteration. Note that the reason
why iteration 1 has higher I/O cost is that in iteration 1
extra sorts on \nodetable\ and \edgetable\ are performed.

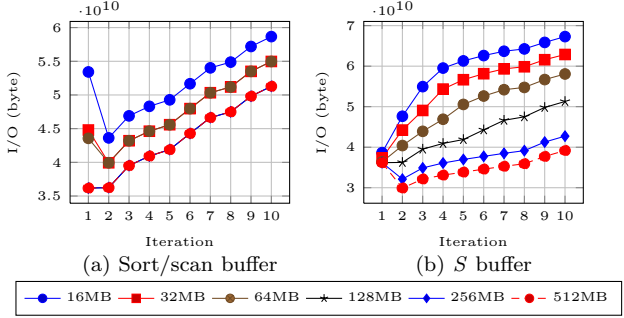
\begin{figure}[htbp]
\centering
        \begin{subfigure}[t]{0.45\columnwidth}
                \centering
\begin{tikzpicture}[baseline,trim axis left]
	\begin{axis}[
tiny,
ylabel=I/O (byte),
xlabel=Iteration,
xtick={1,2,3,4,5,6,7,8,9,10},
grid=major,
height=4cm,width=4.5cm,
legend columns=6,
legend entries={16MB,32MB,64MB,128MB,256MB,512MB},
legend to name=named_bdbs,
]
	\addplot table[x=Iteration,y=16] {figures/table_buffer.dat};
    \addplot table[x=Iteration,y=32] {figures/table_buffer.dat};
    \addplot table[x=Iteration,y=64] {figures/table_buffer.dat};
    \addplot table[x=Iteration,y=128] {figures/table_buffer.dat};
    \addplot table[x=Iteration,y=256] {figures/table_buffer.dat};
    \addplot table[x=Iteration,y=512] {figures/table_buffer.dat};
	\end{axis}
\end{tikzpicture}
\caption{Sort/scan buffer}
\label{fig:table_buffer}
        \end{subfigure}%
        ~
        \begin{subfigure}[t]{0.45\columnwidth}
                \centering
\begin{tikzpicture}[baseline,trim axis left]
	\begin{axis}[
tiny,
ylabel=I/O (byte),
xlabel=Iteration,
xtick={1,2,3,4,5,6,7,8,9,10},
grid=major,
height=4cm,width=4.5cm,
]
	\addplot table[x=Iteration,y=16] {figures/s_buffer.dat};
    \addplot table[x=Iteration,y=32] {figures/s_buffer.dat};
    \addplot table[x=Iteration,y=64] {figures/s_buffer.dat};
    \addplot table[x=Iteration,y=128] {figures/s_buffer.dat};
    \addplot table[x=Iteration,y=256] {figures/s_buffer.dat};
    \addplot table[x=Iteration,y=512] {figures/s_buffer.dat};
	\end{axis}
\end{tikzpicture}
\caption{\s\ buffer}
\label{fig:s_buffer}
        \end{subfigure}
\ref{named_bdbs}
~
\caption{I/O for different buffer size setting for sort/scan and \s\ ($k=10$)}
\end{figure}

For the setting on \s, we set the buffer size ranging from 16MB to 512MB,
while keeping the sort/scan buffer to be 128MB, recording the I/O of the buffer to
the disk system. From Figure \ref{fig:s_buffer} we also see that more buffer
brings less I/O, as expected. However, in this case the buffer size change has a
bigger impact on the I/O performance. This indicates that if we have a certain
amount of memory space, it is more beneficial to allocate more memory to the
\s\ buffer than to the sort/scan buffer. Note that \s\ buffer
also shows quite high hit ratio during execution (more than 0.98 for DBPedia in
all settings).

\subsubsection{Scalability}

In order to measure how well the algorithm scales, we generate different size of
SP2B datasets (edge count 1M, 5M, 10M, 50M, 100M, 500M), and measure the I/O and
elapsed time for each dataset. In Figure \ref{fig:per_edge} we see that the
time spent on each edge is on the order of $10^{-5}$ seconds, and the I/O spent on each edge
is under 4000 bytes (which is one typical disk page size). The algorithm's
performance scales (almost) linearly with the data size.

\begin{figure}[htbp]
\centering
\begin{subfigure}[b]{0.45\columnwidth}
\centering
\begin{tikzpicture}
	\begin{semilogxaxis}[tiny,xlabel=Data size (number of edges),ylabel=Time (s) per edge,grid=major,height=4cm,width=4.5cm]
	\addplot table[x=size,y=timeonedge] {figures/per_edge.dat};
	\end{semilogxaxis}
\end{tikzpicture}
\end{subfigure}
~
\begin{subfigure}[b]{0.45\columnwidth}
\centering
\begin{tikzpicture}
	\begin{semilogxaxis}[tiny,xlabel=Data size (number of edges),ylabel=I/O (byte) per edge,grid=major,height=4cm,width=4.5cm]
	\addplot table[x=size,y=ioonedge] {figures/per_edge.dat};
	\end{semilogxaxis}
\end{tikzpicture}
\end{subfigure}
\caption{Time and I/O spent on each edge on average ($k=10$)}
\label{fig:per_edge}
\end{figure}
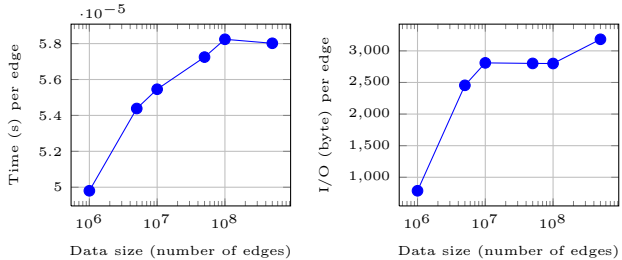

\subsection{Experiments on the edge update algorithm (\addedges)}

Edge updates are common operations for graph data. For our datasets, adding one
edge means to add a link between two wiki pages (WikiLinks), to add more
information to one publication or author (DBLP), to follow one more person
(Twitter) and so on. Sometimes we would like to also add several edges together
at once.  So in this subsection we test the performance of Algorithm
\ref{algo:add_edge} (\addedges), first adding a single edge and then adding a set of edges.

\subsubsection{Observations on single edge update}

To create the dataset for testing, we randomly take one edge from the edge set,
perform \construct\ on the rest of the dataset, and apply \addedges\ on this
edge. We believe the edge selection is more natural this way, since it take into
account the distribution of edges among nodes. We repeat the experiment 10 times
and take the average of the measured numbers. In Figure \ref{fig:check_node} we
show how many nodes are checked for adding one edge to the graph in each
iteration. In Figure \ref{fig:change_node} we show how many nodes actually
change their partition IDs in each iteration. From the figures we see that the
behavior varies for different datasets; graphs that have larger degrees tend to
propagate more changes to later iterations, which complies with our intuition.

\begin{figure}[htbp]
\begin{subfigure}[t]{0.23\textwidth}
\centering
\begin{tikzpicture}
	\begin{semilogyaxis}[
tiny,
xlabel=Iteration,
ylabel=Check number of nodes,
grid=major,
ytick={0.1,1,10,100,1000,10000,100000,1000000,10000000},
xtick={1,2,3,4,5,6,7,8,9,10},
height=4cm,width=4.5cm,
]
	\addplot table[x=Iteration,y=jamendo] {figures/check_node.dat};
    \addplot table[x=Iteration,y=linkedmdb] {figures/check_node.dat};
    \addplot table[x=Iteration,y=dblp] {figures/check_node.dat};
    \addplot table[x=Iteration,y=wikilinks] {figures/check_node.dat};
    \addplot table[x=Iteration,y=dbpedia] {figures/check_node.dat};
    \addplot table[x=Iteration,y=bsbm] {figures/check_node.dat};
    \addplot table[x=Iteration,y=sp2b] {figures/check_node.dat};
	\end{semilogyaxis}
\end{tikzpicture}
\caption{Number of nodes being checked for each iteration}
\label{fig:check_node}
\end{subfigure}
~
\begin{subfigure}[t]{0.23\textwidth}
\centering
\begin{tikzpicture}
	\begin{semilogyaxis}[
tiny,
xlabel=Iteration,
ylabel=Change number of nodes,
grid=major,
ytick={0.1,1,10,100,1000,10000,100000,1000000,10000000},
xtick={1,2,3,4,5,6,7,8,9},
height=4cm,width=4.5cm,
]
	\addplot table[x=Iteration,y=jamendo] {figures/change_node.dat};
    \addplot table[x=Iteration,y=linkedmdb] {figures/change_node.dat};
    \addplot table[x=Iteration,y=dblp] {figures/change_node.dat};
    \addplot table[x=Iteration,y=wikilinks] {figures/change_node.dat};
    \addplot table[x=Iteration,y=dbpedia] {figures/change_node.dat};
    \addplot table[x=Iteration,y=bsbm] {figures/change_node.dat};
    \addplot table[x=Iteration,y=sp2b] {figures/change_node.dat};
	\end{semilogyaxis}
\end{tikzpicture}
\caption{Number of nodes that change partitions for each iteration}
\label{fig:change_node}
\end{subfigure}
~
\begin{subfigure}[t]{0.23\textwidth}
\centering
\begin{tikzpicture}
	\begin{semilogyaxis}[
tiny,
xlabel=Iteration,
ylabel=Change number of partitions,
ymin=0,
grid=major,
legend style={at={(2.3,1)}, anchor=north east},
ytick={0.1,1,10,100,1000,10000,100000,1000000,10000000},
xtick={1,2,3,4,5,6,7,8,9,10},
height=4cm,width=4.5cm,
]
	\addplot table[x=Iteration,y=jamendo] {figures/change_partition.dat};
    \addplot table[x=Iteration,y=linkedmdb] {figures/change_partition.dat};
    \addplot table[x=Iteration,y=dblp] {figures/change_partition.dat};
    \addplot table[x=Iteration,y=wikilinks] {figures/change_partition.dat};
    \addplot table[x=Iteration,y=dbpedia] {figures/change_partition.dat};
    \addplot table[x=Iteration,y=bsbm] {figures/change_partition.dat};
    \addplot table[x=Iteration,y=sp2b] {figures/change_partition.dat};
    \legend{Jamendo,LinkedMDB,DBLP,WikiLinks,DBPedia,BSBM,SP2B}
	\end{semilogyaxis}
\end{tikzpicture}
\caption{Number of partitions whose members change their partition blocks for each iteration}
\label{fig:change_partition}
\end{subfigure}
~
\begin{subfigure}[t]{0.23\textwidth}
\end{subfigure}
\caption{Experiment on \addedges\ when $k=10$}
\end{figure}
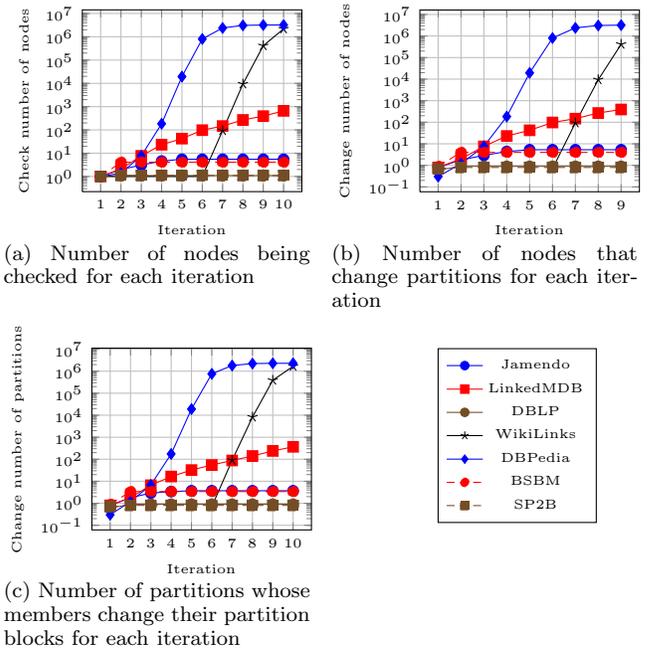

Since there is a chance that many nodes are changed but they may all belong to a certain set of partitions, we also show how many partitions change their members in each iteration (Figure \ref{fig:change_partition}).
We see that the behavior of Figure \ref{fig:change_partition} is closely related to that of Figure \ref{fig:change_node}.
\subsubsection{Comparison of \construct\ and \addedges\ (single edge update)}
\label{sec:compare_con_add}

After edge insertion, if there is no update algorithm available, the only choice
to get the \kbisimnormal\ partition is to execute the \construct\ from scratch on
the new dataset. So this would be the baseline for the \addedges\ algorithm to
compare.  In the following we compare the overall I/O and time (Figure
\ref{fig:build_update_iotime}) of the two algorithms. We see that indeed the
\addedges\ algorithm always achieves a better performance than using \construct\
to recompute the \kbisimnormal\ partition result from scratch, with up to an
order of magnitude improvement.

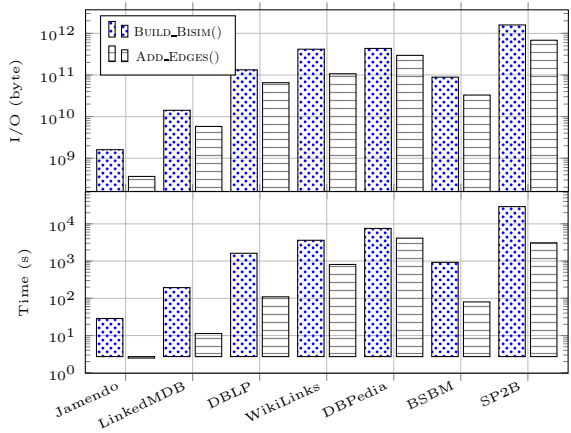
\begin{figure}[htbp]
\centering
\begin{tikzpicture}
\begin{groupplot}[
    group style={
        group name=my plots,
        group size=1 by 2,
        xlabels at=edge bottom,
        xticklabels at=edge bottom,
        vertical sep=0pt
    },
    ybar,
    tiny,
    xtick = data,
    width=8cm,
    height=4cm,
    symbolic x coords={Jamendo,LinkedMDB,DBLP,WikiLinks,DBPedia,BSBM,SP2B},
    x tick label style={rotate=25,anchor=east},
    legend pos=north west,
    grid=major,
    xtick align=outside
]
\nextgroupplot[ymode=log,ylabel=I/O (byte)]
\addplot[draw=black,pattern color=blue,pattern=crosshatch dots] table[x=dataset,y=Construction,] {figures/build_update_io.dat};
\addplot[draw=black,pattern color=gray,pattern=horizontal lines] table[x=dataset,y=Maintenance,] {figures/build_update_io.dat};
\legend{\construct,\addedges}
\nextgroupplot[ymode=log,ylabel=Time (s)]
	\addplot[draw=black,pattern color=blue,pattern=crosshatch dots] table[x=dataset,y=Construction,] {figures/build_update_time.dat};
	\addplot[draw=black,pattern color=gray,pattern=horizontal lines] table[x=dataset,y=Maintenance,] {figures/build_update_time.dat};
\end{groupplot}
\end{tikzpicture}
\caption{I/O and time comparison for \construct\ and \addedges\ after inserting one edge to the dataset ($k=10$)}
\label{fig:build_update_iotime}
\end{figure}

\subsubsection{Comparison of \construct\ and \addedges\ in extreme cases (single edge update)}

From the above experiments, we see that the performance of the algorithms are
highly related to the datasets they process.  For some datasets, the update
algorithm is very much favorable while in other cases not so much.  In the
following, we would like to gain a better understanding of this phenomena.

We achieve this with two synthetic datasets, triggering both the extreme cases
where the construction algorithm benefits the most and the update algorithm benefits
the most.  The first dataset, Dbest, shows a best-case scenario that the update algorithm
can achieve relative to the construction algorithm. In this case we create a
full k-ary tree, with edges pointing from parents to their children. When adding
one edge to the tree, we add one edge to the leaf node, so that no node's
signature would change after the insertion. In this case the update algorithm
does the least amount of work, without propagating any change to further
iterations during execution. Figure \ref{fig:example_dbest} shows an example of
Dbest, which is a binary tree with height 3. The dashed edge is the newly added
edge.

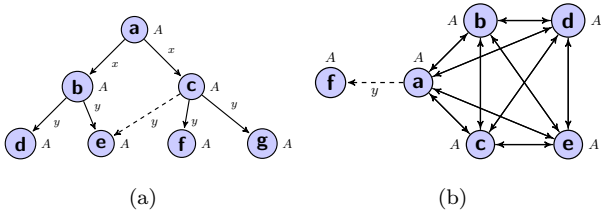
\begin{figure}[htbp]
\centering
\begin{subfigure}[b]{0.45\columnwidth}
\centering
\resizebox{4cm}{!}{
\begin{tikzpicture}[->,>=stealth',shorten >=1pt,auto,node distance=2cm,
  thick,main node/.style={circle,fill=blue!20,draw,font=\sffamily\Large\bfseries}]
\centering
  \node[main node] (1) [label=right:$A$]{a};
  \node[main node] (2) [label=right:$A$,below left of=1] {b};
  \node[main node] (3) [label=right:$A$,below right of=1] {c};
  \node[main node] (4) [label=right:$A$,below left of=2] {d};
  \node[main node] (5) [label=right:$A$,right of=4] {e};
  \node[main node] (6) [label=right:$A$,right of=5] {f};
  \node[main node] (7) [label=right:$A$,right of=6] {g};

  \path[every node/.style={font=\sffamily\small}]
    (1) edge node {$x$} (2)
    (1) edge node {$x$} (3)
    (2) edge node {$y$} (4)
    (2) edge node {$y$} (5)
    (3) edge node {$y$} (6)
    (3) edge node {$y$} (7);
  \draw[dashed] (3) edge node {$y$} (5);
\end{tikzpicture}
}
\caption{}
\label{fig:example_dbest}
\end{subfigure}
~
\begin{subfigure}[b]{0.45\columnwidth}
\centering
\resizebox{4cm}{!}{
\begin{tikzpicture}[->,>=stealth',shorten >=1pt,auto,node distance=2cm,
  thick,main node/.style={circle,fill=blue!20,draw,font=\sffamily\Large\bfseries}]
\centering
  \node[main node] (1) [label=above:$A$]{a};
  \node[main node] (2) [label=left:$A$,above right of=1] {b};
  \node[main node] (3) [label=left:$A$,below right of=1] {c};
  \node[main node] (4) [label=right:$A$,right of=2] {d};
  \node[main node] (5) [label=right:$A$,right of=3] {e};
  \node[main node] (6) [label=above:$A$,left of=1] {f};

  \path[every node/.style={font=\sffamily\small}]
    (1) edge node {}(2)
    (1) edge node {}(3)
    (1) edge node {}(4)
    (1) edge node {}(5)
    (2) edge node {}(1)
    (2) edge node {}(3)
    (2) edge node {}(4)
    (2) edge node {}(5)
    (3) edge node {}(1)
    (3) edge node {}(2)
    (3) edge node {}(4)
    (3) edge node {}(5)
    (4) edge node {}(1)
    (4) edge node {}(2)
    (4) edge node {}(3)
    (4) edge node {}(5)
    (5) edge node {}(1)
    (5) edge node {}(2)
    (5) edge node {}(3)
    (5) edge node {}(4);

  \draw[dashed] (1) edge node {$y$} (6);
\end{tikzpicture}
}
\caption{}
\label{fig:example_dworst}
\end{subfigure}
\caption{Examples for Dbest (\ref{fig:example_dbest}) and Dworst (\ref{fig:example_dworst}) datasets}
\end{figure}

The second dataset, Dworst, exhibits a worst-case scenario for the update
algorithm, relative to construction. In this case we create a
complete graph, with edges all labeled with \emph{x}. Then when adding one more
edge (labeled \emph{y}) to one of the nodes, every other node in each iteration
is affected and therefore all the nodes' signatures are changed. The update
algorithm has to check all nodes in every iteration. Figure
\ref{fig:example_dworst} shows an example of Dworst, a complete graph with 5
nodes. The dashed edge is the newly added edge.

\begin{figure}[htbp]
\centering
\begin{subfigure}[b]{0.45\columnwidth}
\centering
\begin{tikzpicture}
	\begin{axis}[
tiny,
xlabel=Iteration,ylabel=I/O (byte),
grid=major,
xtick={1,2,3,4,5,6,7,8,9,10},
legend columns=2,
legend entries={Dbest \construct,Dbest \addedges,Dworst \construct,Dworst \addedges},
legend to name=named_bestworst,
height=4cm,width=4.5cm,
]
	\addplot[mark=*,red,solid,mark size=2] table[x=Iteration,y=bestconstruct] {figures/best_worst_io.dat};
    \addplot[mark=square*,blue,solid,mark size=2] table[x=Iteration,y=bestupdate] {figures/best_worst_io.dat};
    \addplot[mark=triangle*,brown,solid,mark size=2] table[x=Iteration,y=worstconstruct] {figures/best_worst_io.dat};
    \addplot[mark=o,black,solid,mark size=2] table[x=Iteration,y=worstupdate] {figures/best_worst_io.dat};
	\end{axis}
\end{tikzpicture}
\end{subfigure}
~
\begin{subfigure}[b]{0.45\columnwidth}
\centering
\begin{tikzpicture}
	\begin{axis}[
tiny,
xlabel=Iteration,ylabel=Time (s),
grid=major,
xtick={1,2,3,4,5,6,7,8,9,10},
ytick={0,100,200,300,400},
height=4cm,width=4.5cm,
]
	\addplot[mark=*,red,solid,mark size=2] table[x=Iteration,y=bestconstruct] {figures/best_worst_time.dat};
    \addplot[mark=square*,blue,solid,mark size=2] table[x=Iteration,y=bestupdate] {figures/best_worst_time.dat};
    \addplot[mark=triangle*,brown,solid,mark size=2] table[x=Iteration,y=worstconstruct] {figures/best_worst_time.dat};
    \addplot[mark=o,black,solid,mark size=2] table[x=Iteration,y=worstupdate] {figures/best_worst_time.dat};
    \end{axis}
\end{tikzpicture}
\end{subfigure}
\ref{named_bestworst}
\caption{Time and I/O comparison for Dbest and Dworst by applying \construct\ and \addedges\ algorithms on both ($k=10$)}
\label{fig:dbestio}
\end{figure}
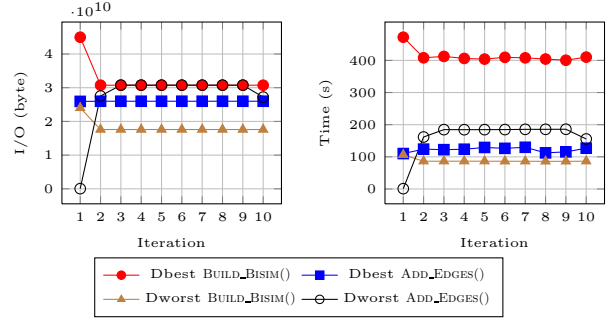

We generate Dbest and Dworst on the scale of 100 million edges, and measure the
elapsed time and I/O costs (Figure \ref{fig:dbestio}) for both the construction
(\construct) and edge update (\addedges) algorithms in each iteration. We see that
indeed for Dbest, the update algorithm shows a 4 times speed-up in time
compared with the construction algorithm. For Dworst, the update algorithm is 2
times slower in time than the construction algorithm.

\subsubsection{Experiment on multiple edges update}
\label{sec:add_edges}
To test the performance of multiple edges update, we randomly select a set of edges from the dataset (edge count 1, 10, 100, \dots, 1M), and apply the algorithm \addedges\ upon them,
recording the I/O and elapsed time performances.
In Figure \ref{fig:batch_timeio}, we show the I/O improvement ratio and time speed up ratio (both construct/update) for all cases (taking the average).
A gray line is drawn at $y=1$ for both figures to split the space to indicate whether \addedges\ performs better than \construct\ or not.
From the figure we see that
for many of the datasets, it is beneficial to do batch update (\addedges) up until $10^4$ edges.
An order of magnitude time speed up is observed for Jamendo, LinkedMDB and DBLP.
In fact, if we consider the time cost for Jamendo and DBLP, it is always favorable to use \addedges\ in all cases.
For dataset DBPedia, however, changes propagate rapidly in the first few iterations, therefore the construction algorithm (\construct) becomes a better choice when there are more than ten edges to be updated.

\begin{figure}[htbp]
\begin{subfigure}[b]{0.45\columnwidth}
\centering
\begin{tikzpicture}
	\begin{loglogaxis}[
tiny,
xlabel=Update batch size,
ylabel=I/O improvement ratio,
legend columns=5,
legend entries={Jamendo,LinkedMDB,DBLP,WikiLinks,DBPedia},
legend to name=named_batch,
log basis y=2,
grid=none,
yminorgrids=true,
height=5cm,width=4.5cm,
]

	\addplot[mark=*,red,solid,mark size=2] table[x=size,y=jamendo,] {figures/batch_io_ratio.dat};
    \addplot[mark=square*,blue,solid,mark size=2] table[x=size,y=linkedmdb,] {figures/batch_io_ratio.dat};
    \addplot[mark=triangle*,brown,solid,mark size=2] table[x=size,y=dblp,] {figures/batch_io_ratio.dat};
    \addplot[mark=diamond*,cyan,solid,mark size=2] table[x=size,y=wikilinks,] {figures/batch_io_ratio.dat};
    \addplot[mark=o,black,solid,mark size=2] table[x=size,y=dbpedia,] {figures/batch_io_ratio.dat};
    \draw [gray,rounded corners] (axis cs:0.1,0.1) rectangle (axis cs:10000000,1);
    \node at (axis cs:40,0.67){\tiny{$ \begin{array}{l}\mbox{cost more} \\ \mbox{than}~\construct\end{array}$}};
    \node at (axis cs:60000,3){\tiny{$ \begin{array}{l}\mbox{cost less than} \\ ~\construct\end{array}$}};

	\end{loglogaxis}
\end{tikzpicture}
\end{subfigure}
~
\begin{subfigure}[b]{0.45\columnwidth}
\centering
\begin{tikzpicture}
    \begin{loglogaxis}[
tiny,
xlabel=Update batch size,
ylabel=Time speed up ratio,
log basis y=2,
grid=none,
height=5cm,width=4.5cm,
]

	\addplot[mark=*,red,solid,mark size=2] table[x=size,y=jamendo,] {figures/batch_time_ratio.dat};
    \addplot[mark=square*,blue,solid,mark size=2] table[x=size,y=linkedmdb,] {figures/batch_time_ratio.dat};
    \addplot[mark=triangle*,brown,solid,mark size=2] table[x=size,y=dblp,] {figures/batch_time_ratio.dat};
    \addplot[mark=diamond*,cyan,solid,mark size=2] table[x=size,y=wikilinks,] {figures/batch_time_ratio.dat};
    \addplot[mark=o,black,solid,mark size=2] table[x=size,y=dbpedia,] {figures/batch_time_ratio.dat};
    \draw [gray,rounded corners] (axis cs:0.1,0.1) rectangle (axis cs:10000000,1);
    \node at (axis cs:20,0.4){\tiny{$ \begin{array}{l}\mbox{slower than} \\ \construct\end{array}$}};
    \node at (axis cs:100,3){\tiny{$ \begin{array}{l}\mbox{faster than} \\ \construct\end{array}$}};
\end{loglogaxis}
\end{tikzpicture}
\end{subfigure}
\centering
\ref{named_batch}
\caption{I/O (left) and time (right)  improvement ratio $\frac{cost(\construct)}{cost(\addedges)}$ for batch edge updates ($k=10$)}
\label{fig:batch_timeio}
\end{figure}

\vspace{-5pt}

\section{Conclusion and future work}\label{sec:conclude}

In this paper we have presented, to our knowledge, the first I/O efficient
general-purpose algorithms for constructing and maintaining localized
bisimulation partitions on massive disk-resident graphs.
A theoretical analysis showed, and an extensive empirical study confirmed,
that our algorithms are not only efficient and practical to use, but also scale
well with the size of the data.

We close by listing a few promising research directions for further study.
First, it would be interesting to
explore adaptations and extensions of our algorithms for alternative hardware platforms
(e.g., multicore, SSD).  Second, as we indicated at various points, many
alternative data structures and join algorithms can be investigated for
optimizing various aspects of the proposed algorithms.  
Third, because of their bulk streaming-based nature, many aspects of our algorithms naturally lend themselves to state-of-the-art parallel and distributed computing frameworks such as MapReduce.  Studying the possibilities for leveraging our solutions to further scale the performance of these frameworks on real world graphs is certainly an interesting research direction.
Last but not least,
the ideas developed in this paper provide a basis for investigating
related problems such as computing and maintaining \emph{simulation} partitions in
external memory (e.g., \cite{Gentilini2003}).


\balance

\paragraph*{Acknowledgments}
The research of YW is supported by Research Foundation Flanders (FWO) during her sabbatical visit to Hasselt University, Belgium.
The research of YL, GF, JH and PD is supported by the Netherlands Organisation
for Scientific Research (NWO).

\bibliographystyle{abbrv}
\bibliography{paper2012}  

\begin{thebibliography}{10}

\bibitem{Aggarwal1988}
A.~Aggarwal and J.~S. Vitter.
\newblock The input/output complexity of sorting and related problems.
\newblock {\em Commun. ACM}, 31(9):1116--1127, Sept. 1988.

\bibitem{Arge1997}
L.~Arge, P.~Ferragina, R.~Grossi, and J.~S. Vitter.
\newblock On sorting strings in external memory.
\newblock In {\em STOC}, pages 540--548, El Paso, TX, USA, 1997.

\bibitem{bizer2009berlin}
C.~Bizer and A.~Schultz.
\newblock The berlin sparql benchmark.
\newblock {\em IJSWIS}, 5(2):1--24, 2009.

\bibitem{distributebisim}
S.~Blom and S.~Orzan.
\newblock Distributed state space minimization.
\newblock {\em Int. J. STTT}, 7(3):280--291, 2005.

\bibitem{Buneman:2003:PQC:1315451.1315465}
P.~Buneman, M.~Grohe, and C.~Koch.
\newblock Path queries on compressed {XML}.
\newblock In {\em VLDB}, pages 141--152, Berlin, Germany, 2003.

\bibitem{Dementiev2008}
R.~Dementiev, L.~Kettner, and P.~Sanders.
\newblock {STXXL}: standard template library for {XXL} data sets.
\newblock {\em Softw. Pract. Exper.}, 38(6):589--637, May 2008.

\bibitem{Dovier2004}
A.~Dovier, C.~Piazza, and A.~Policriti.
\newblock An efficient algorithm for computing bisimulation equivalence.
\newblock {\em Theor. Comp. Sci.}, 311(1-3):221--256, 2004.

\bibitem{Fan:2012:GPM:2274576.2274578}
W.~Fan.
\newblock Graph pattern matching revised for social network analysis.
\newblock In {\em ICDT}, pages 8--21, Berlin, Germany, 2012.

\bibitem{Fan:2012:QPG:2213836.2213855}
W.~Fan, J.~Li, X.~Wang, and Y.~Wu.
\newblock Query preserving graph compression.
\newblock In {\em SIGMOD}, pages 157--168, Scottsdale, AZ, USA, 2012.

\bibitem{FerraginaG99}
P.~Ferragina and R.~Grossi.
\newblock The string b-tree: A new data structure for string search in external
  memory and its applications.
\newblock {\em J. ACM}, 46(2):236--280, 1999.

\bibitem{FletcherGWGBP09}
G.~H.~L. Fletcher, D.~Van{~}Gucht, Y.~Wu, M.~Gyssens, S.~Brenes, and
  J.~Paredaens.
\newblock A methodology for coupling fragments of {XPath} with structural
  indexes for {XML} documents.
\newblock {\em Inf. Syst.}, 34(7):657--670, 2009.

\bibitem{Gentilini2003}
R.~Gentilini, C.~Piazza, and A.~Policriti.
\newblock From bisimulation to simulation: Coarsest partition problems.
\newblock {\em J. Automated Reasoning}, 31:73--103, 2003.

\bibitem{Roberto12}
R.~Grossi and G.~Ottaviano.
\newblock Fast compressed tries through path decompositions.
\newblock In {\em ALENEX}, pages 65--74, Kyoto, Japan, 2012.

\bibitem{HassanzadehC09}
O.~Hassanzadeh and M.~P. Consens.
\newblock Linked movie data base.
\newblock In {\em LDOW}, 2009.

\bibitem{HeY04}
H.~He and J.~Yang.
\newblock Multiresolution indexing of {XML} for frequent queries.
\newblock In {\em ICDE}, pages 683--694, Boston, MA, USA, 2004.

\bibitem{Heath2011}
T.~Heath and C.~Bizer.
\newblock {\em Linked Data: Evolving the Web into a Global Data Space}.
\newblock Synthesis Lectures on the Semantic Web. Morgan {\&} Claypool
  Publishers, 2011.

\bibitem{Hellings:2012:EEB:2213836.2213899}
J.~Hellings, G.~H.~L. Fletcher, and H.~Haverkort.
\newblock Efficient external-memory bisimulation on {DAGs}.
\newblock In {\em SIGMOD}, pages 553--564, Scottsdale, AZ, USA, 2012.

\bibitem{Kaushik2002}
R.~Kaushik, P.~Shenoy, P.~Bohannon, and E.~Gudes.
\newblock Exploiting local similarity for indexing paths in graph-structured
  data.
\newblock In {\em ICDE}, pages 129--140, San Jose, CA, USA, 2002.

\bibitem{Kwak10www}
H.~Kwak, C.~Lee, H.~Park, and S.~Moon.
\newblock {W}hat is {T}witter, a social network or a news media?
\newblock In {\em WWW}, pages 591--600, New York, NY, USA, 2010. ACM.

\bibitem{Mamoulis2003}
N.~Mamoulis.
\newblock Efficient processing of joins on set-valued attributes.
\newblock In {\em SIGMOD}, pages 157--168, San Diego, CA, USA, 2003.

\bibitem{Milo1999}
T.~Milo and D.~Suciu.
\newblock Index structures for path expressions.
\newblock In {\em ICDT}, pages 277--295, Jerusalem, Israel, 1999.

\bibitem{paige1987three}
R.~Paige and R.~Tarjan.
\newblock Three partition refinement algorithms.
\newblock {\em SIAM J. Comput.}, 16:973, 1987.

\bibitem{saintdb2012}
F.~Picalausa, Y.~Luo, G.~H.~L. Fletcher, J.~Hidders, and S.~Vansummeren.
\newblock A structural approach to indexing triples.
\newblock In {\em ESWC}, pages 406--421, Heraklion, Greece, 2012.

\bibitem{QunLO03}
C.~Qun, A.~Lim, and K.~W. Ong.
\newblock {D(k)}-index: An adaptive structural summary for graph-structured
  data.
\newblock In {\em SIGMOD}, pages 134--144, San Diego, CA, USA, 2003.

\bibitem{Rajasekaran98}
S.~Rajasekaran and I.~Lee.
\newblock Parallel algorithms for relational coarsest partition problems.
\newblock {\em IEEE Trans. Parallel and Distributed Syst.}, 9(7):687--699,
  1998.

\bibitem{Sangiorgi:2011:ATB:2103601}
D.~Sangiorgi and J.~Rutten.
\newblock {\em Advanced Topics in Bisimulation and Coinduction}.
\newblock Cambridge University Press, New York, NY, USA, 1st edition, 2011.

\bibitem{Schmidt2009}
M.~Schmidt, T.~Hornung, G.~Lausen, and C.~Pinkel.
\newblock Sp$^2$bench: A sparql performance benchmark.
\newblock In {\em ICDE}, pages 222--233, Washington, DC, USA, 2009. IEEE
  Computer Society.

\bibitem{smolka95}
S.~A. Smolka, O.~Sokolsky, and S.~Zhang.
\newblock On the parallel complexity of bisimulation and model checking.
\newblock {\em Modal Logic and Process Algebra: A Bisimulation Perspective in
  CSLI Lecture Notes}, 53:257--288, 1995.

\bibitem{Tian2008}
Y.~Tian, R.~A. Hankins, and J.~M. Patel.
\newblock Efficient aggregation for graph summarization.
\newblock In {\em SIGMOD}, pages 567--580, New York, NY, USA, 2008. ACM.

\bibitem{YiHSY04}
K.~Yi, H.~He, I.~Stanoi, and J.~Yang.
\newblock Incremental maintenance of {XML} structural indexes.
\newblock In {\em SIGMOD}, pages 491--502, Paris, France, 2004.

\end{thebibliography}
\iftoggle{expand}{
\begin{appendix}
\section{Proofs}
\subsection{Proofs for the body}

\begin{proposition}
\label{prop:1}
$u \approx^k v \Rightarrow u \approx^{k-1} v~(k > 0)$.
\end{proposition}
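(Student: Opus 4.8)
The plan is to prove the statement by induction on $k$, exploiting the recursive shape of Definition~\ref{def:kbisim}: the forth and back conditions defining $\approx^{k}$ quantify over out-edges and require the matched targets to be related by $\approx^{k-1}$, so a single ``step down'' in the superscript should fall out once the witnessing targets are stepped down as well.

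For the base case $k=1$, I would observe that $u \approx^{1} v$ forces condition~(1) of the definition, namely $\nodeLabel{u} = \nodeLabel{v}$. Since for $k=0$ the forth and back conditions (2) and (3) are vacuous, condition~(1) alone is exactly what $u \approx^{0} v$ demands, so the base case is immediate and uses nothing but the node-label requirement.

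For the inductive step, I would fix $k \geq 2$ and assume the claim one level down, i.e.\ $a \approx^{k-1} b \Rightarrow a \approx^{k-2} b$ for all nodes $a,b$. Assuming $u \approx^{k} v$, I would verify the three conditions for $u \approx^{k-1} v$ in turn. Condition~(1) is inherited directly from $u \approx^{k} v$. For the forth condition~(2), take any $u'$ with $(u,u') \in E$; since $u \approx^{k} v$, there is a $v'$ with $(v,v') \in E$, $\edgeLabel{u,u'} = \edgeLabel{v,v'}$, and $u' \approx^{k-1} v'$. Applying the induction hypothesis to the pair $(u',v')$ downgrades this to $u' \approx^{k-2} v'$, which is precisely what condition~(2) for $\approx^{k-1}$ requires. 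The back condition~(3) follows by the symmetric argument with the roles of $u$ and $v$ interchanged. Hence $u \approx^{k-1} v$, closing the induction.

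The argument is essentially bookkeeping over the definition, so I do not expect a deep obstacle. The only point that needs genuine care is the boundary: the induction hypothesis must be invoked only for $k \geq 2$, so that the level $\approx^{k-2}$ to which the targets are demoted is in fact defined, while the $k=1$ case must be dispatched separately using solely the label condition. One could instead present this as a direct unfolding of the definition for a generic $k$, but phrasing it as an explicit induction keeps the dependence on the targets' relation transparent and makes the single downgrade step $\approx^{k-1} \to \approx^{k-2}$ clearly accounted for.
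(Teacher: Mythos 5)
Your proposal is correct and follows essentially the same route as the paper's proof: induction on $k$ with the base case $k=1$ reduced to the node-label condition, and an inductive step that downgrades the matched edge targets from $\approx^{k-1}$ to $\approx^{k-2}$ via the induction hypothesis (plus the symmetric back condition). Your version is in fact slightly more careful than the paper's, since you explicitly note that condition~(1) is inherited and that the induction hypothesis is only invoked for $k\geq 2$, where $\approx^{k-2}$ is defined.
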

\begin{proof}[for Proposition \ref{prop:1}]
\label{proof:1}
By induction on $k$.

(1) $k=1$. This is obvious, as $0$-bisimilarity just enforces equality of node
labels.

(2) $k>1$. Assume that this holds for $j-1$ ($\approx^{j-1} \Rightarrow
\approx^{j-2}, 0<j-1<k$), we want to show that this also holds for $j$ ($u
\approx^j v \Rightarrow u \approx^{j-1} v$).  Let $u \approx^{j} v$.  According
to the definition, for all outgoing edges $(u,u') \in E$, there exists some edge
$(v,v') \in E$, such that $u' \approx^{j-1} v'$ and $\edgeLabel{u,u'} =
\edgeLabel{v,v'}$, and vice versa.  Since $\approx^{j-1} \Rightarrow
\approx^{j-2}$, we have $u' \approx^{j-2} v'$, then we have $u \approx^{j-1} v$.
So $u \approx^j v \Rightarrow u \approx^{j-1} v$.
\end{proof}

\begin{proof}[for Proposition \ref{prop:equal}]
\label{proof:equal}
$\Rightarrow$:

(1) For $k=0$, this is trivial, since $\pid{0}{u}=\pid{0}{v}$.

(2) For $k > 0$,
(which also means $u \approx^k v$), we want to show that $\sig{k}{u}=\sig{k}{v}$.
According to Proposition \ref{prop:1}, $u \approx^k v \Rightarrow u \approx^0
v$, so that $\pid{0}{u}=\pid{0}{v}$. And for each outgoing edge $(u,u')$ of $u$,
there exists some outgoing edge $(v,v')$ of $v$, such that $u' \approx^{k-1}
v'$, then $\pid{k-1}{u'}=\pid{k-1}{v'}$, and $\edgeLabel{u,u'} =
\edgeLabel{v,v'}$.  Therefore each pair in $\sig{k}{u}$ equals to some pair in
$\sig{k}{v}$, and vice versa.  Then we have $\sig{k}{u}=\sig{k}{v}$.

$\Leftarrow$:

(1) For $k=0$, this is obvious.

(2) For $k > 0$.  Let $\sig{k}{u}=\sig{k}{v}$, we want to show that
$\pid{k}{u}=\pid{k}{v}$ (or $u \approx^k v$). Since $\sig{k}{u}=\sig{k}{v}$, we know
that for every outgoing edge $(u,u')$ of $u$, we have a pair
$(\edgeLabel{u,u'},\pid{k-1}{u'})$ in $\sig{k}{u}$, we can find an equal pair
$(\edgeLabel{v,v'},\pid{k-1}{v'})$ in $\sig{k}{v}$, such that
$\pid{k-1}{u'}=\pid{k-1}{v'}$ and $\edgeLabel{u,u'} = \edgeLabel{v,v'}$.
By definition, this means $u \approx^k v$.  Then we have $\pid{k}{u}=\pid{k}{v}$.
\end{proof}


\begin{proof}[for Theorem \ref{theory:construct}]
\label{proof:construct}

After all the I/O cost of one iteration of \kbisim\ computation is bounded by
$O(sort(|\et|)) + O(scan(|\nt|))$, $k$ is a given input, and there is one extra sort on \nodetable\ in iteration 1. Hence Algorithm
\ref{algo:build_k-bisims} has the I/O complexity of $O(k\cdot sort(|\et|) + k\cdot scan(|\nt|) + sort(|\nt|))$.

During computation, only one \nt\ and \et\ are used, and \s\ is used.
The space upper bound for \s\ is the same as the space upper bound for all signatures.
Since in the algorithm, we construct all signatures by joining the information from \nt\ and \emph{F} (which is a projection of \et),
the space upper bound of \s\ is O(|\nt|+|\et|). Therefore, the overall space complexity upper bound of Algorithm \ref{algo:build_k-bisims}
is O(|\nt|+|\et|).

We prove correctness inductively.

(1) $k=0$. Since we are following the definition, this is obvious.

(2) $k>0$. Assume we get the correct $(k-1)$ \emph{bisimulation} partitioning
results. In iteration $k$, for each node $u$ in \nodetable, we construct
\sig{k}{u} and insert it in \s\ to get \pid{k}{u}. According to Proposition
\ref{prop:equal} and the definition of \s, we are sure that \pid{k}{u} is
correct.
\end{proof}


\begin{proof}[for Theorem \ref{theory:update}]
\label{proof:update}
After all the I/O cost of one iteration of Algorithm \ref{algo:add_edge} is bounded by $O(sort(|\et|)) + O(sort(|\nt|))$, and the upper bound of the number of iterations is $k$. Hence Algorithm \ref{algo:add_edge} has the given I/O complexity.

During computation, only one \nt\ and \et\ are used, and \s\ is used.
Here the node table contains historical information from iteration 0 to k, so comparing with the original \nt, the space upper bound is $O(k\cdot|\nt|)$.
Also according to the algorithm, every iteration would have to save its signature mapping to \s, so the space upper bound of \s\ is $O(k\cdot|\et|)$.
Therefore, the overall space complexity upper bound of Algorithm \ref{algo:add_edge}
is $O(k\cdot|\nt|+k\cdot|\et|)$.

Let $(s, l, t)$ be the new edge.  After we insert $s,t$ to $N$, \pid{0}{u} will
not change for any $u\in N$.  So, according to Definition \ref{def:sig}, there
are  only two
ways that \sig{j}{u} $(0<j\le k)$ could be affected:

(1) a new pair $(\edgeLabel{v},\pid{j-1}{v})$ appears, or

(2) changes of \pid{j-1}{v} in some existing pair $(\edgeLabel{v},\allowbreak \pid{j-1}{v})$, where $v$ is some child of $u$.

Case (1) can only be caused by adding a new edge to $u$, so that in our case
this can only happen to \sig{j}{s} $(0<j\le k)$, and we capture these changes in
line \ref{algo:add_edge_oldsig} of Algorithm \ref{algo:add_edge}.  The second
case can only happen when the $pId_{j-1}$ for the children of $u$ changes. We
capture (and propagate) these changes in line \ref{algo:add_edge_propagate} of
Algorithm \ref{algo:add_edge}.  Therefore, we capture all changes in the
signatures of $u \in N$, and recompute the signatures accordingly. Hence
Algorithm \ref{algo:add_edge} produces the correct \kbisim\ partitioning result.
\end{proof}


\subsection{Alternative definitions for localized bisimulation}
In this section, we show the equivalence of various definitions of localized bisimulation that are studied in the literature.
We have an alternative definition for $k$-\emph{bisimilar}
\cite{HeY04,Kaushik2002,QunLO03}:
\begin{definition}
\label{def:kaushik}
Let $k\geq 0$ and \graph\ be a graph.
Nodes $u,v \in N$ are called {\em Kaushik $k$-bisimilar} (denoted as
$u~\hat{\approx}^k~v$), iff the following holds:

\begin{enumerate}
  \item if $k=0$, then
  \nodeLabel{u} = \nodeLabel{v}.
  \item if $k>0$, then:
  \begin{enumerate}
  \item $u~\hat{\approx}^{k-1}~v$

  \item $\forall u' \in N [(u,u') \in E \Rightarrow \exists v' \in N [(v,v') \in E,\allowbreak ~u' \hat{\approx}^{k-1} v'~{and}~\edgeLabel{u,u'} = \edgeLabel{v,v'}] ]$, and
\item
$\forall v' \in N [(v,v') \in E \Rightarrow \exists u' \in N [(u,u') \in E,\allowbreak ~v' \hat{\approx}^{k-1} u'~{and}~\edgeLabel{v,v'} = \edgeLabel{u,u'}] ]$.

\end{enumerate}
\end{enumerate}
\end{definition}

\begin{proposition}
$u~\hat{\approx}^k~v$ iff $u \approx^k v$.
\label{prop:alternative}
\end{proposition}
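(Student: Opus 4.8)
The plan is to prove the equivalence by induction on $k$, exploiting the fact that the only apparent discrepancy between the two definitions is Kaushik's explicit clause $u~\hat{\approx}^{k-1}~v$ (clause 2(a) of Definition \ref{def:kaushik}), which has no counterpart in Definition \ref{def:kbisim}. The key observation is that this clause is \emph{redundant} for the standard notion, because $\approx^k$ is already downward closed: by Proposition \ref{prop:1}, $u \approx^k v$ implies $u \approx^{k-1} v$. Once this is in hand, the forward and backward clauses of the two definitions can be matched up level by level through the induction hypothesis.

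For the base case $k=0$, both $u \approx^0 v$ and $u~\hat{\approx}^0~v$ reduce by definition to $\nodeLabel{u} = \nodeLabel{v}$, so there is nothing to do.

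For the inductive step, fix $k>0$ and assume $\hat{\approx}^{k-1}$ and $\approx^{k-1}$ coincide. First I would treat the direction $u~\hat{\approx}^k~v \Rightarrow u \approx^k v$. Clause 2(a) gives $u~\hat{\approx}^{k-1}~v$, hence $u \approx^{k-1} v$ by the induction hypothesis; since condition (1) of Definition \ref{def:kbisim} is built into every level, this already yields $\nodeLabel{u} = \nodeLabel{v}$, i.e.\ condition (1) at level $k$. Clauses 2(b) and 2(c) are, after replacing $\hat{\approx}^{k-1}$ by $\approx^{k-1}$ via the induction hypothesis, verbatim conditions (2) and (3) of Definition \ref{def:kbisim}; this gives $u \approx^k v$. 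For the converse, suppose $u \approx^k v$. Proposition \ref{prop:1} yields $u \approx^{k-1} v$, so by the induction hypothesis $u~\hat{\approx}^{k-1}~v$, which is exactly clause 2(a). Conditions (2) and (3) of Definition \ref{def:kbisim}, with $\approx^{k-1}$ rewritten as $\hat{\approx}^{k-1}$ by the hypothesis, become clauses 2(b) and 2(c), so $u~\hat{\approx}^k~v$.

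The only real content beyond bookkeeping is the use of Proposition \ref{prop:1} to absorb Kaushik's redundant clause 2(a); this is the step I expect to be the crux, since without downward closure of $\approx^k$ the extra clause would genuinely strengthen the relation. Everything else is a direct translation between the two sets of clauses, made legitimate by the induction hypothesis, so I do not anticipate additional obstacles.
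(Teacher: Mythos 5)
Your proof is correct and follows essentially the same route as the paper's: induction on $k$, invoking Proposition \ref{prop:1} to discharge Kaushik's clause 2(a) in the direction $\approx^k \Rightarrow \hat{\approx}^k$, and using clause 2(a) together with the induction hypothesis to recover node-label equality in the converse direction. The only difference is cosmetic --- you run a single induction carrying the full biconditional as hypothesis where the paper runs two separate one-directional inductions, and you are in fact slightly more explicit than the paper about rewriting clauses 2(b)/2(c) via the induction hypothesis.
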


\begin{proof}
We first want to prove that $\approx^k \Rightarrow \hat{\approx}^k$. We will do it inductively.

(1) $k=0$. This is obvious, according to the definitions.

(2) $k>0$. Assume that for nodes $u,v \in N$, $u \approx^j v \Rightarrow u~\hat{\approx}^j~v~(0<j<k)$, we want to show
$u \approx^{j+1} v \Rightarrow u~\hat{\approx}^{j+1}~v$.
We only need to show that $u \approx^{j+1} v \Rightarrow u \approx^j v$. Since this holds, and
$u \approx^j v \Rightarrow u~\hat{\approx}^j~v$ ,
$u \approx^{j+1} v \Rightarrow u~\hat{\approx}^j~v$. Then according to Definition \ref{def:kaushik} we are done.

We then prove that $\hat{\approx}^k \Rightarrow \approx^k$. We will do it inductively.

(1) $k=0$. This is obvious, according to the definitions.

(2) $k>0$. Assume that for nodes $u,v \in N$, $u~\hat{\approx}^j~v \Rightarrow u \approx^j v~(0<j<k)$, we want to show
$u~\hat{\approx}^{j+1}~v \Rightarrow u \approx^{j+1} v$. We only need to show that \nodeLabel{u} = \nodeLabel{v}.
From $u~\hat{\approx}^{j+1}~v$, we know that $u~\hat{\approx}^{j}~v$,
therefore $u \approx^j v$. So \nodeLabel{u} = \nodeLabel{v}. Proof done.
\end{proof}

We also have another alternative definition for $k$-\emph{bisimilar} \cite{YiHSY04}:

\begin{definition}
Let \graph\ be a graph.
Let $\mathcal{I} = \{I_1, \ldots, I_n\}, n >0,$ be a set of subsets of $N$.
$\mathcal{I}$ is said to {\em partition} $G$ (or, to be a partition of $G$) if its
elements are pairwise disjoint and $N = \bigcup_{I\in\mathcal{I}} I$.
Partition  $\mathcal{I}$ is said to {\em refine} partition  $\mathcal{J}$ (or, is
a {\em refinement} of $\mathcal{J}$) if for every $I\in\mathcal{I}$ there
exists a  $J\in\mathcal{J}$ such that $I\subseteq J$.
\end{definition}

\begin{definition}
Let \graph\ be a graph and $\mathcal{I}$ and $\mathcal{J}$
be two partitions of $G$.  $\mathcal{I}$ is said to be {\em stable} with respect
to $\mathcal{J}$ if for any $I\in\mathcal{I}$, $J\in\mathcal{J}$, and edge label
$\ell$, it holds
that either $I \subseteq parents_\ell(J)$ or
$I \cap parents_\ell(J) = \emptyset$ (where $parents_\ell(J) = \{y \mid \exists x\in J (
 (y, x)\in E \textrm{~and~} edgeLabel(y, x) = \ell)\}$).

\end{definition}

\begin{definition}
Let $k\geq 0$ and \graph\ be a graph.
The $k$-partition of $G$ is defined inductively as follows:
\begin{enumerate}
\item if $k=0$, then the $k$-partition of $G$ is the set formed by partitioning $N$ by node labels.
\item if $k>0$, then the $k$-partition of $G$ is the smallest
(i.e., least cardinality)
partition $\mathcal{I}$
of $G$  such that there exists a $(k-1)$-partition
$\mathcal{J}$ of $G$ such that
$\mathcal{I}$ is a refinement of $\mathcal{J}$ and is stable with respect to $\mathcal{J}$.
\end{enumerate}

\end{definition}

\begin{definition}
\label{def:paige}
Let $k\geq 0$ and \graph\ be a graph.
Nodes $u,v \in N$ are called {\em Paige-Tarjan $k$-bisimilar} (denoted as
$u \stackrel{*}{\approx}_k v$), iff there exists an element $B$ in the
$k$-partition of $G$ such that $u\in B$ and $v\in B$.

\end{definition}

\begin{proposition}
$u \stackrel{*}{\approx}_k  v$ iff $u \approx^k v$.
\end{proposition}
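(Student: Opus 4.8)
The plan is to prove the stronger statement that the partition of $N$ into the equivalence classes of $\approx^k$ (well defined and unique, since $\approx^k$ is an equivalence relation) coincides, as a partition of $G$, with the $k$-partition of Definition~\ref{def:paige}. Writing $B_k$ for the partition induced by $\approx^k$ and $P_k$ for the $k$-partition, it suffices to show $B_k = P_k$, since this immediately yields $u \stackrel{*}{\approx}_k v \iff u \approx^k v$. I would argue by induction on $k$. The base case $k=0$ is immediate: both $B_0$ and $P_0$ are obtained by partitioning $N$ according to node labels.

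For the inductive step, assume $P_{k-1} = B_{k-1}$; since $B_{k-1}$ is unique, this also pins down the $(k-1)$-partition $\mathcal{J}$ appearing in the definition of $P_k$ to be $B_{k-1}$. I would then verify, for the candidate $B_k$, the three properties characterizing $P_k$. First, $B_k$ refines $B_{k-1}$: this is exactly Proposition~\ref{prop:1}, which gives $u \approx^k v \Rightarrow u \approx^{k-1} v$, so every block of $B_k$ lies inside a block of $B_{k-1}$. Second, $B_k$ is stable with respect to $B_{k-1}$: fix a block $I$ of $B_k$, a block $J$ of $B_{k-1}$, and an edge label $\ell$, and suppose some $u \in I$ lies in $parents_\ell(J)$, witnessed by $(u,x)\in E$ with $edgeLabel(u,x)=\ell$ and $x\in J$; for any other $v\in I$ we have $u \approx^k v$, so the forward clause of Definition~\ref{def:kbisim} supplies $(v,x')\in E$ with $edgeLabel(v,x')=\ell$ and $x' \approx^{k-1} x$, whence $x'\in J$ and $v \in parents_\ell(J)$. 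Thus $I\subseteq parents_\ell(J)$, which is stability.

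The crux is the third property: that $B_k$ is the \emph{smallest} (coarsest) stable refinement of $B_{k-1}$, i.e.\ that any partition $\mathcal{I}'$ refining $B_{k-1}$ and stable with respect to $B_{k-1}$ itself refines $B_k$. Here I would take two nodes $u,v$ in a common block of $\mathcal{I}'$ and show $u\approx^k v$ directly from Definition~\ref{def:kbisim}. Equality of node labels follows because $\mathcal{I}'$ refines $B_{k-1}$, which in turn refines $B_0$. For the forward clause, given $(u,u')\in E$ with label $\ell$, let $J$ be the block of $B_{k-1}$ containing $u'$; then $u\in parents_\ell(J)$, so by stability the whole block of $\mathcal{I}'$ containing $u$ lies in $parents_\ell(J)$, yielding some $(v,v')\in E$ with label $\ell$ and $v'\in J$, i.e.\ $v' \approx^{k-1} u'$; the backward clause is symmetric. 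Hence $u \approx^k v$, so $\mathcal{I}'$ refines $B_k$. Combining the three properties, $B_k$ is a stable refinement of $B_{k-1}$ that every stable refinement refines, so it is the unique coarsest such partition; therefore $B_k = P_k$, and the proposition follows.

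I expect the main obstacle to be this smallest/coarsest argument, and in particular keeping straight the direction of ``refines'' versus ``least cardinality'': one must check that ``every stable refinement refines $B_k$'' is precisely what makes $B_k$ minimal, and that the stability hypothesis is applied in the correct direction, turning membership of a single node in $parents_\ell(J)$ into membership of its entire block, so as to reconstruct the existential matching conditions of Definition~\ref{def:kbisim}.
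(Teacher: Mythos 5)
Your proof is correct, and it takes a genuinely different (and more complete) route than the paper's. The paper proves the two implications separately, each by induction on $k$ at the level of node pairs: for $\ptkbisim{k} \Rightarrow \mykbisim{k}$ it reads the bisimulation transfer conditions off the stability of the $k$-partition with respect to the $(k-1)$-partition (essentially your third step specialized to $\mathcal{I}' = P_k$), and for $\mykbisim{k} \Rightarrow \ptkbisim{k}$ it observes that the $\mykbisim{k}$-classes form a stable refinement of the $(k-1)$-partition (your first two steps, using Proposition \ref{prop:1}) and then concludes directly. You instead run a single induction establishing the stronger partition identity $B_k = P_k$, and you supply the one ingredient the paper leaves implicit: that \emph{every} stable refinement of $B_{k-1}$ refines $B_k$. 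This is exactly what is needed to pass from ``$B_k$ is a stable refinement'' to ``$B_k$ equals the least-cardinality one,'' since least cardinality alone does not a priori give blockwise comparability between $B_k$ and $P_k$; the paper's closing step ``we then have $u \ptkbisim{j+1} v$'' quietly relies on this fact without proving it. Your argument also yields, as a by-product, existence and uniqueness of the coarsest stable refinement at each level, which the phrase ``the smallest partition'' in Definition \ref{def:paige} tacitly presupposes. The final cardinality bookkeeping you flag is handled correctly: every stable refinement refining $B_k$ forces $|P_k| \geq |B_k|$, minimality of $P_k$ then gives $|P_k| = |B_k|$, and since $P_k$ refines $B_k$, a refinement of equal finite cardinality must coincide with it. The cost of your approach is a slightly heavier setup (tracking partitions rather than node pairs); what it buys is a self-contained and rigorous minimality argument where the paper's is elliptical.
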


\begin{proof}
We first want to prove that $\ptkbisim{k} \Rightarrow \mykbisim{k}$. We will do it inductively.

(1) $k=0$. This is obvious, since nodes are partitioned by node labels.

(2) $k>0$.
Assume that for nodes $u,v \in N$, $u \ptkbisim{j} v \Rightarrow u \mykbisim{j} v~(0<j<k)$, we want to show
$u \ptkbisim{j+1} v \Rightarrow u \mykbisim{j+1} v$.
We only need to prove point 2 and 3 of Definition \ref{def:kbisim} (\mykbisim{k}).
From $u~\ptkbisim{j+1}~v$, we know that
$\forall u' \in N [(u,u') \in E \Rightarrow \exists v' \in N [(v,v') \in E,~u' \ptkbisim{j} v'\textrm{~and~}\edgeLabel{u,u'} = \edgeLabel{v,v'}] ]$, and
$\forall v' \in N [(v,v') \in E \Rightarrow \exists u' \in N [(u,u') \in E,~v' \ptkbisim{j} u'\textrm{~and~}\edgeLabel{v,v'} = \edgeLabel{u,u'}] ]$.
Since $\ptkbisim{j} \Rightarrow \mykbisim{j}$, we have $u \mykbisim{j+1} v$.

We then prove that $\mykbisim{k} \Rightarrow \ptkbisim{k}$. We will do it inductively.

(1) $k=0$. This is obvious, since nodes are partitioned by node labels.

(2) $k>0$.
Assume that for nodes $u,v \in N$, $u \mykbisim{j} v \Rightarrow u \ptkbisim{j} v~(0<j<k)$, we want to show
$u \mykbisim{j+1} v \Rightarrow u \ptkbisim{j+1} v$.
From $u \mykbisim{j+1} v$, we know that
$\forall u' \in N [(u,u') \in E \Rightarrow \exists v' \in N [(v,v') \in E,~u' \mykbisim{j} v'\textrm{~and~}\edgeLabel{u,u'} = \edgeLabel{v,v'}] ]$, and
$\forall v' \in N [(v,v') \in E \Rightarrow \exists u' \in N [(u,u') \in E,~v' \mykbisim{j} u'\textrm{~and~}\edgeLabel{v,v'} = \edgeLabel{u,u'}] ]$.
And since $\mykbisim{j} \Rightarrow \ptkbisim{j}$, we know that all children of $u,v$ who have the same edge label belong to the same partition.
This fulfills the stable condition.
From Proposition \ref{prop:1} we know that \mykbisim{j+1} is a refinement of \mykbisim{j}.
We then have $u \ptkbisim{j+1} v$.
\end{proof}

\subsection{Partition splitting stop condition}
\label{sec:stopcondition}

\begin{proposition}
If $\mykbisim{j}=\mykbisim{j+1}$, then $\mykbisim{j}=\mykbisim{j'}~(\forall j' \ge j)$.
\label{prop:stay}
\end{proposition}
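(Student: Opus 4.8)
The plan is to exploit the fact that, by Definition~\ref{def:kbisim}, the relation $\mykbisim{k+1}$ is entirely determined by $\mykbisim{k}$ together with the fixed node- and edge-labellings of $G$. Concretely, I would introduce a refinement operator $\Phi$ sending any binary relation $R$ on $N$ to the relation $\Phi(R)$ defined exactly as in clauses (1)--(3) of Definition~\ref{def:kbisim}, but with every occurrence of $\mykbisim{k-1}$ replaced by $R$: that is, $u\,\Phi(R)\,v$ holds iff $\nodeLabel{u}=\nodeLabel{v}$ and every outgoing edge of one node is matched, via $R$ and edge-label equality, by an outgoing edge of the other, and vice versa. Reading Definition~\ref{def:kbisim} literally then gives $\mykbisim{k+1}=\Phi(\mykbisim{k})$ for every $k\ge 0$.

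The crucial observation is that $\Phi$ is a genuine function of its argument: its output depends on $R$ only through which pairs $(u',v')$ belong to $R$. Hence equal inputs force equal outputs, i.e.\ $R=R'$ implies $\Phi(R)=\Phi(R')$. I would isolate and verify this as a short lemma, since it is the step on which everything hinges.

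Granting this, the proposition follows by a one-line induction on $j'$. The base cases $j'=j$ and $j'=j+1$ are immediate, the latter being precisely the hypothesis $\mykbisim{j}=\mykbisim{j+1}$. For the inductive step, suppose $\mykbisim{j'}=\mykbisim{j'+1}$ for some $j'\ge j$; applying $\Phi$ to both sides and using $\mykbisim{k+1}=\Phi(\mykbisim{k})$ yields
\[
\mykbisim{j'+2}=\Phi(\mykbisim{j'+1})=\Phi(\mykbisim{j'})=\mykbisim{j'+1},
\]
so the chain stays constant at the next level. Chaining these equalities gives $\mykbisim{j}=\mykbisim{j'}$ for all $j'\ge j$, as required.

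I expect the only real obstacle to be making the functional-dependence lemma airtight: one must check that the two-sided matching conditions in clauses (2)--(3) of Definition~\ref{def:kbisim} reference $R$ purely as a set of pairs, so that substituting an equal relation cannot alter any truth value. This is routine once carefully phrased, and the displayed equalities are equalities of relations (equivalently, of the induced partitions). Note that Proposition~\ref{prop:1} is not strictly needed for this argument; it merely confirms, separately, that the stabilized relation coincides with the full (non-localized) bisimulation.
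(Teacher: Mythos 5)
Your proof is correct, and while it rests on the same underlying insight as the paper's proof --- that the level-$(k{+}1)$ relation is a deterministic function of the level-$k$ relation, so a fixed point once reached persists --- it takes a genuinely different route to formalizing that insight. The paper argues through its signature machinery: from $\mykbisim{j}=\mykbisim{j+1}$ it chooses identifier functions with $\pid{j}{u}=\pid{j+1}{u}$ for all $u$, so that $\sig{j+2}{u}=\sig{j+1}{u}$ by Definition~\ref{def:sig}, and Proposition~\ref{prop:equal} then yields $\mykbisim{j+2}=\mykbisim{j+1}$, the step repeating for all $j'\ge j$. You instead abstract the refinement step into the operator $\Phi$ with $\mykbisim{k+1}=\Phi(\mykbisim{k})$ read directly off Definition~\ref{def:kbisim}, and the role of Proposition~\ref{prop:equal} is played by the trivial extensionality of $\Phi$. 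Your route buys self-containedness: it needs no signatures or partition identifiers, and in particular avoids the paper's slightly informal relabeling step (``we could assign $\pid{j}{u}=\pid{j+1}{u}$''), which silently uses the fact that identifiers are determined only up to bijective renaming; equality of relations is exactly the hypothesis, and you never leave that level. The paper's route buys alignment with the algorithm: since Definition~\ref{def:sig} and Proposition~\ref{prop:equal} are the backbone of Algorithm~\ref{algo:build_k-bisims}, its proof doubles as a direct justification of the early-stopping test as actually implemented (signatures, hence block counts, stop changing), whereas your $\Phi$ lives outside that computational formalism. Your closing observation is also accurate: Proposition~\ref{prop:1} is not used in the paper's proof of Proposition~\ref{prop:stay} either --- it enters only in the upper-bound argument of Proposition~\ref{prop:upperbound} and in identifying the stabilized relation with full bisimulation.
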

\begin{proof}
Since $\mykbisim{j}=\mykbisim{j+1}$, $\forall u \in N$, we could assign $pId_j(u)=pId_{j+1}(u)$.
Then, according to Definition \ref{def:sig} (signature) and Proposition \ref{prop:equal}, it holds that $pId_{j+2}(u) = pId_{j+1}(u)$,
and the same applies for any further $j' \ge j$.
\end{proof}

\begin{proposition}
The $j$ in Proposition \ref{prop:stay} always exists, and its upper bound is $|N|$ (number of nodes).
\label{prop:upperbound}
\end{proposition}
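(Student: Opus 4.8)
The plan is to track the number of partition blocks $b_j$ induced by $\mykbisim{j}$ as $j$ grows, and to show that $(b_j)_{j\ge 0}$ is a non-decreasing integer sequence confined to $\{1,\dots,|N|\}$ that strictly increases until the moment of stabilization. First I would invoke Proposition \ref{prop:1}, which gives that $\mykbisim{j+1}$ is a refinement of $\mykbisim{j}$: every $(j{+}1)$-block is contained in a $j$-block. Hence $b_{j+1}\ge b_j$, and moreover equal block counts force $\mykbisim{j+1}=\mykbisim{j}$, because a refinement of a partition that preserves the number of blocks must coincide with it. Equivalently, whenever $\mykbisim{j}\neq\mykbisim{j+1}$ we get the strict increase $b_{j+1}\ge b_j+1$.

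Next I would bound the sequence. Since the blocks are non-empty and partition $N$, we always have $1\le b_j\le |N|$. Combined with monotonicity, $(b_j)_{j\ge 0}$ is a bounded non-decreasing integer sequence, so it cannot increase forever and must reach a value it then retains; by the observation above, the first such repetition is exactly an index $j$ with $\mykbisim{j}=\mykbisim{j+1}$. This establishes existence of the index promised in Proposition \ref{prop:stay}, and Proposition \ref{prop:stay} then propagates stabilization to all $j'\ge j$.

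For the quantitative bound I would let $j^\ast$ be the least index with $b_{j^\ast}=b_{j^\ast+1}$. For every $i<j^\ast$ the partition refines strictly, so $b_0<b_1<\cdots<b_{j^\ast}$ is a strictly increasing chain of integers all lying in $\{1,\dots,|N|\}$. Such a chain has at most $|N|$ terms, whence $j^\ast+1\le |N|$ and therefore $j^\ast\le |N|-1\le |N|$, which is the claimed upper bound; the extreme case $b_0=1,\,b_1=2,\dots$ shows it is essentially tight.

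I do not expect a serious obstacle, as this is the standard partition-refinement fixpoint argument. The one step deserving care is the equivalence ``equal block count $\Leftrightarrow$ equal partition'' under refinement, since it is what lets me convert stabilization of the integer sequence $b_j$ into stabilization of the relations $\mykbisim{j}$; the remainder is bookkeeping on a bounded monotone sequence.
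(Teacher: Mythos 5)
Your proof is correct and follows essentially the same route as the paper's: invoke Proposition \ref{prop:1} to get that each $\mykbisim{j+1}$ refines $\mykbisim{j}$, observe that every non-stabilizing step must split at least one block, and bound the strictly increasing block count by $|N|$. You are somewhat more careful than the paper in spelling out the key equivalence (a refinement preserving the block count must equal the original partition) and in noting that the bound is really $|N|-1$, but these are refinements of the same counting argument, not a different proof.
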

\begin{proof}

From Proposition \ref{prop:1}, we know that $\forall u,v \in N$,
if $u \mykbisim{j+1} v$, then $u \mykbisim{j} v$, which is equivalent of saying partitions will either split or stay the same.
If they stay for one time, they will stay forever (Proposition \ref{prop:stay}).
Otherwise, $G$ has to at least split one of its
partition blocks for each \mykbisim{i} where $i \le j$ , in which case $j$ reach the
upper bound $|N|$.
\end{proof}

From Proposition \ref{prop:upperbound}, we know that there is an upper bound for the number of iterations in Algorithm \ref{algo:build_k-bisims}.
If this upper bound is smaller than the user input $k$, algorithm can terminate earlier.
Since the partition blocks will either split or remain the same, the number of partition blocks will either increase or remain.
Therefore, by simply checking if two consecutive iterations produce the same number of partition blocks,
we could decide whether the computation should stop.

\subsection{Connection between localized bisimulation and full bisimulation}
We observe the following useful connection between localized and full
bisimulation.

\begin{definition}
Let $k\geq 0$ and \graph\ be a graph.
Nodes $u,v \in N$ are called {\em bisimilar} (denoted as
$u \approx v$), iff the following holds:
\begin{enumerate}
  \item
  \nodeLabel{u} = \nodeLabel{v},
  \item
  $\forall u' \in N [(u,u') \in E \Rightarrow \exists v' \in N [(v,v') \in E,~u' \approx v'~and~
  \edgeLabel{u,u'} = \edgeLabel{v,v'}] ]$, and
  \item
  $\forall v' \in N [(v,v') \in E \Rightarrow \exists u' \in N [(u,u') \in E,~v' \approx u'~and~
  \edgeLabel{v,v'} = \edgeLabel{u,u'}] ]$.
\end{enumerate}

\end{definition}

\begin{proposition}
Let \graph\ be a graph.
There exists a $k\ge 0$ such that for any
$u,v \in N$ it holds that
$u \approx_k v$  iff $u \approx v$.
\end{proposition}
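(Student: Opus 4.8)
The plan is to prove that the two relations coincide at the point where the refinement chain $\approx^0 \supseteq \approx^1 \supseteq \cdots$ stabilizes. By Proposition~\ref{prop:1} each $\approx^{k}$ refines $\approx^{k-1}$, and by Propositions~\ref{prop:stay} and~\ref{prop:upperbound} there is a least index $k^\ast \le |N|$ with $\approx^{k^\ast} = \approx^{k^\ast+1}$, hence $\approx^{k^\ast} = \approx^{j}$ for every $j \ge k^\ast$. I claim this $k^\ast$ witnesses the statement, i.e.\ $\approx^{k^\ast} \;=\; \approx$, and I would establish this by proving the two inclusions separately.

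For the inclusion $\approx \;\subseteq\; \approx^{k}$ (for every $k$, in particular $k=k^\ast$), I would argue by induction on $k$. The base case $k=0$ is immediate since bisimilar nodes share the same node label. For the inductive step, suppose $u \approx v$; by the definition of $\approx$, for every edge $(u,u')\in E$ there is an edge $(v,v')\in E$ with $u'\approx v'$ and $\edgeLabel{u,u'}=\edgeLabel{v,v'}$, and applying the induction hypothesis gives $u'\approx^{k-1}v'$. The symmetric condition is handled identically, so $u\approx^{k}v$, and thus $\approx \;\subseteq\; \approx^{k^\ast}$.

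The reverse inclusion $\approx^{k^\ast} \;\subseteq\; \approx$ is where stabilization is used. As is standard for such a coinductive definition, full bisimilarity $\approx$ is read as the greatest relation satisfying its three defining clauses; equivalently, every relation $R$ meeting those clauses (a \emph{bisimulation}) is contained in $\approx$. I would therefore show that $\approx^{k^\ast}$ is itself a bisimulation. Clause~1 holds because $\approx^{k^\ast}$ refines $\approx^{0}$ (iterate Proposition~\ref{prop:1}), so related nodes carry equal labels. For clauses~2 and~3, let $u\approx^{k^\ast}v$; since $\approx^{k^\ast}=\approx^{k^\ast+1}$ we also have $u\approx^{k^\ast+1}v$, and unfolding Definition~\ref{def:kbisim} at level $k^\ast+1>0$ gives: for every $(u,u')\in E$ there is $(v,v')\in E$ with $u'\approx^{k^\ast}v'$ and matching edge labels, and symmetrically. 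These are exactly the bisimulation clauses for the relation $\approx^{k^\ast}$, so $\approx^{k^\ast}$ is a bisimulation, whence $\approx^{k^\ast}\subseteq\approx$. Combining the two inclusions yields $\approx^{k^\ast}=\approx$.

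The main obstacle is making the second inclusion rigorous: the paper states full bisimulation recursively, and the argument hinges on interpreting it as a greatest fixpoint so that ``every bisimulation is contained in $\approx$'' is available. Once that reading is fixed, the fixpoint identity $\approx^{k^\ast}=\approx^{k^\ast+1}$ does all the work, since it lets the level-$(k^\ast{+}1)$ matching conditions be re-expressed purely in terms of $\approx^{k^\ast}$, turning the stabilized localized relation into a genuine bisimulation. I would take care to justify that $\approx^{k^\ast}$ refines $\approx^{0}$ by iterating Proposition~\ref{prop:1}, and that the least stabilizing index exists and is bounded by $|N|$ via Propositions~\ref{prop:stay} and~\ref{prop:upperbound}.
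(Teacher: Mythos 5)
Your proof is correct and takes essentially the same route as the paper's: the inclusion $\approx\;\subseteq\;\approx^{k}$ by induction on $k$, and the converse by stabilizing the refinement chain at an index bounded by $|N|$ (Propositions \ref{prop:stay} and \ref{prop:upperbound}) and unfolding Definition \ref{def:kbisim} one level to show the stabilized relation satisfies the defining clauses of $\approx$. The only difference is rigor rather than method: where the paper simply declares that $\approx_{k+1}$ ``has the same definition as $\approx$'' and concludes equality outright, you correctly invoke the greatest-fixpoint (coinductive) reading of $\approx$ so that ``every bisimulation is contained in $\approx$'' is available --- a step the paper's wording leaves implicit.
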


\begin{proof}
First we want to show $u \approx_k v \Rightarrow u \approx v$.
From Proposition \ref{prop:upperbound}, we know that $k$ has an upper bound $|N|$. Here we set $k$ to $|N|$,
which means that $\approx_k = \approx_{k+1}$. Then according to the definition, in iteration $k+1$, for $u \approx_{k+1} v$, we have:
\begin{enumerate}
  \item
  \nodeLabel{u} = \nodeLabel{v},
  \item
  $\forall u' \in N [(u,u') \in E \Rightarrow \exists v' \in N [(v,v') \in E,~u' \approx^{k} v'~and~
  \edgeLabel{u,u'} = \edgeLabel{v,v'}] ]$, and
  \item
  $\forall v' \in N [(v,v') \in E \Rightarrow \exists u' \in N [(u,u') \in E,~v' \approx^{k} u'~and~
  \edgeLabel{v,v'} = \edgeLabel{u,u'}] ]$.
\end{enumerate}
Since $\approx_k = \approx_{k+1}$, we can replace $\approx_k$ with $\approx_{k+1}$, then the relationship $\approx_{k+1}$ has the same definition as $\approx$. So that $\approx_{k+1} = \approx$.

Then we want to show that $u \approx v \Rightarrow u \approx_k v$. We will do it inductively.
\begin{enumerate}
  \item $k=0$. This is obvious.
  \item $k>0$. Assume that this holds for $j-1$, we want to show that this also holds for $j$. Let $u \approx v$, we want to show that $u \approx_j v$. According to the definition, we want to have for all outgoing edges $(u,u') \in E$, there exists some edge $(v,v') \in E$, such that $u' \approx^{j-1} v'$ and \edgeLabel{u,u'} = \edgeLabel{v,v'}, and vice versa. Because of $u \approx v$, we already have $u' \approx v'$; and because of $u \approx v \Rightarrow u \approx_{j-1} v$, we have $u' \approx_{j-1} v'$. Then all the requirements for $u \approx_j v$ are fulfilled. So $\approx \Rightarrow \approx_k$.
\end{enumerate}\end{proof}


\begin{sidewaystable*}[h]
\centering
\caption{Experiment results of \construct\ for real and synthetic datasets}
\label{table:big_table}%
\scriptsize
\tabcolsep=0.1cm
\resizebox{\textwidth}{!}{
\begin{tabular}{rrrrrrrrrrrr}
\hline
Data Set & Measurement & Iteration 1 & Iteration 2 & Iteration 3 & Iteration 4 & Iteration 5 & Iteration 6 & Iteration 7 & Iteration 8 & Iteration 9 & Iteration 10 \\
\hline
\multicolumn{1}{c}{\multirow{8}[0]{*}{Jamendo}} & Partition Count & 43    & 199   & 297   & 310   & 310   & 310   & 310   & 310   & 310   & 310 \\
\multicolumn{1}{c}{} & Preparation Time (s) & 0.88  & 0.63  & 0.68  & 0.69  & 0.68  & 0.65  & 0.69  & 0.66  & 0.65  & 0.65 \\
\multicolumn{1}{c}{} & Constructing Time (s) & 1.78  & 2.05  & 2.16  & 2.17  & 2.18  & 2.17  & 2.14  & 2.38  & 2.42  & 2.40 \\
\multicolumn{1}{c}{} & Table Read (byte) & 111,149,056 & 75,497,472 & 75,497,472 & 75,497,472 & 75,497,472 & 75,497,472 & 75,497,472 & 75,497,472 & 75,497,472 & 75,497,472 \\
\multicolumn{1}{c}{} & Table Write (byte) & 113,246,208 & 77,594,624 & 77,594,624 & 77,594,624 & 77,594,624 & 77,594,624 & 77,594,624 & 77,594,624 & 77,594,624 & 77,594,624 \\
\multicolumn{1}{c}{} & S Read (byte) & 8,192 & 0     & 0     & 0     & 0     & 0     & 0     & 0     & 0     & 0 \\
\multicolumn{1}{c}{} & S Write (byte) & 4,096 & 40,960 & 69,632 & 98,304 & 122,880 & 163,840 & 176,128 & 237,568 & 241,664 & 278,528 \\
\multicolumn{1}{c}{} & Max Signature Length & 21    & 23    & 23    & 23    & 23    & 23    & 23    & 23    & 23    & 23 \\
\hline
\multicolumn{1}{c}{\multirow{8}[0]{*}{LinkedMDB}} & Partition Count & 8,460 & 38,291 & 71,161 & 85,327 & 85,660 & 85,692 & 85,704 & 85,707 & 85,709 & 85,711 \\
\multicolumn{1}{c}{} & Preparation Time (s) & 5.78  & 4.88  & 4.94  & 4.77  & 4.86  & 5.87  & 4.91  & 4.90  & 4.79  & 4.80 \\
\multicolumn{1}{c}{} & Constructing Time (s) & 12.29 & 13.58 & 13.73 & 14.49 & 14.00 & 14.56 & 15.58 & 14.46 & 14.51 & 16.05 \\
\multicolumn{1}{c}{} & Table Read (byte) & 731,906,048 & 597,688,320 & 597,688,320 & 597,688,320 & 597,688,320 & 597,688,320 & 597,688,320 & 597,688,320 & 597,688,320 & 597,688,320 \\
\multicolumn{1}{c}{} & Table Write (byte) & 884,998,144 & 752,877,568 & 752,877,568 & 752,877,568 & 752,877,568 & 752,877,568 & 752,877,568 & 752,877,568 & 752,877,568 & 752,877,568 \\
\multicolumn{1}{c}{} & S Read (byte) & 8,192 & 0     & 0     & 0     & 0     & 0     & 0     & 0     & 0     & 0 \\
\multicolumn{1}{c}{} & S Write (byte) & 1,982,464 & 7,389,184 & 16,576,512 & 24,403,968 & 33,886,208 & 45,350,912 & 57,716,736 & 68,988,928 & 76,664,832 & 80,326,656 \\
\multicolumn{1}{c}{} & Max Signature Length & 63    & 179   & 203   & 229   & 243   & 243   & 243   & 243   & 243   & 243 \\
\hline
\multicolumn{1}{c}{\multirow{8}[0]{*}{DBLP}} & Partition Count & 246   & 9,073 & 11,130 & 11,189 & 11,189 & 11,189 & 11,189 & 11,189 & 11,189 & 11,189 \\
\multicolumn{1}{c}{} & Preparation Time (s) & 59.64 & 43.41 & 44.10 & 44.50 & 45.34 & 46.33 & 46.58 & 48.03 & 46.88 & 46.65 \\
\multicolumn{1}{c}{} & Constructing Time (s) & 98.53 & 112.46 & 114.80 & 117.44 & 118.09 & 116.79 & 117.96 & 117.52 & 119.69 & 118.03 \\
\multicolumn{1}{c}{} & Table Read (byte) & 8,044,675,072 & 5,731,516,416 & 5,733,613,568 & 5,733,613,568 & 5,733,613,568 & 5,733,613,568 & 5,733,613,568 & 5,733,613,568 & 5,733,613,568 & 5,733,613,568 \\
\multicolumn{1}{c}{} & Table Write (byte) & 9,353,297,920 & 7,092,568,064 & 7,096,762,368 & 7,096,762,368 & 7,096,762,368 & 7,096,762,368 & 7,096,762,368 & 7,096,762,368 & 7,096,762,368 & 7,096,762,368 \\
\multicolumn{1}{c}{} & S Read (byte) & 8,192 & 0     & 0     & 0     & 0     & 0     & 0     & 0     & 0     & 0 \\
\multicolumn{1}{c}{} & S Write (byte) & 53,248 & 2,854,912 & 3,956,736 & 4,546,560 & 5,300,224 & 7,876,608 & 10,571,776 & 13,393,920 & 15,962,112 & 18,608,128 \\
\multicolumn{1}{c}{} & Max Signature Length & 37    & 99    & 723   & 745   & 745   & 745   & 745   & 745   & 745   & 745 \\
\hline
\multicolumn{1}{c}{\multirow{8}[0]{*}{Wikilinks}} & Partition Count & 2     & 4     & 14    & 327   & 928,765 & 2,992,705 & 3,596,837 & 3,604,409 & 3,605,063 & 3,605,151 \\
\multicolumn{1}{c}{} & Preparation Time (s) & 137.65 & 108.19 & 107.01 & 106.82 & 108.28 & 117.37 & 115.32 & 115.57 & 116.89 & 116.30 \\
\multicolumn{1}{c}{} & Constructing Time (s) & 17.92 & 19.71 & 20.31 & 28.88 & 62.53 & 193.49 & 436.94 & 441.90 & 614.60 & 632.42 \\
\multicolumn{1}{c}{} & Table Read (byte) & 15,065,939,968 & 10,798,235,648 & 10,831,790,080 & 10,888,413,184 & 11,156,848,640 & 12,318,670,848 & 12,580,814,848 & 12,593,397,760 & 12,595,494,912 & 12,595,494,912 \\
\multicolumn{1}{c}{} & Table Write (byte) & 17,205,035,008 & 12,939,427,840 & 13,006,536,704 & 13,119,782,912 & 13,656,653,824 & 15,980,298,240 & 16,504,586,240 & 16,529,752,064 & 16,533,946,368 & 16,533,946,368 \\
\multicolumn{1}{c}{} & S Read (byte) & 8,192 & 0     & 0     & 0     & 24,797,184 & 8,697,421,824 & 12,882,042,880 & 16,277,565,440 & 17,842,032,640 & 19,003,453,440 \\
\multicolumn{1}{c}{} & S Write (byte) & 4,096 & 4,096 & 4,096 & 36,864 & 205,180,928 & 8,431,570,944 & 12,294,479,872 & 15,117,230,080 & 15,919,968,256 & 16,244,105,216 \\
\multicolumn{1}{c}{} & Max Signature Length & 3     & 5     & 9     & 19    & 129   & 6,817 & 8,349 & 9,363 & 9,421 & 9,425 \\
\hline
\multicolumn{1}{c}{\multirow{8}[0]{*}{Dbpedia}} & Partition Count & 362,128 & 2,357,366 & 3,239,710 & 3,273,445 & 3,281,100 & 3,299,007 & 3,343,927 & 3,401,435 & 3,436,428 & 3,450,357 \\
\multicolumn{1}{c}{} & Preparation Time (s) & 146.29 & 108.95 & 113.45 & 116.87 & 114.84 & 116.22 & 116.30 & 116.28 & 114.72 & 119.99 \\
\multicolumn{1}{c}{} & Constructing Time (s) & 213.61 & 366.58 & 466.13 & 585.40 & 632.35 & 664.99 & 679.87 & 763.97 & 863.79 & 1,117.25 \\
\multicolumn{1}{c}{} & Table Read (byte) & 16,760,438,784 & 12,366,905,344 & 12,574,523,392 & 12,595,494,912 & 12,605,980,672 & 12,616,466,432 & 12,629,049,344 & 12,639,535,104 & 12,643,729,408 & 12,643,729,408 \\
\multicolumn{1}{c}{} & Table Write (byte) & 19,295,895,552 & 15,453,913,088 & 15,869,149,184 & 15,911,092,224 & 15,932,063,744 & 15,953,035,264 & 15,978,201,088 & 15,999,172,608 & 16,007,561,216 & 16,007,561,216 \\
\multicolumn{1}{c}{} & S Read (byte) & 8,192 & 3,870,638,080 & 5,215,023,104 & 5,915,021,312 & 6,404,620,288 & 7,598,112,768 & 8,796,708,864 & 9,225,072,640 & 10,492,932,096 & 11,405,717,504 \\
\multicolumn{1}{c}{} & S Write (byte) & 123,658,240 & 4,553,515,008 & 5,857,165,312 & 6,507,692,032 & 6,952,050,688 & 8,115,949,568 & 9,237,364,736 & 9,629,892,608 & 10,667,528,192 & 11,225,329,664 \\
\multicolumn{1}{c}{} & Max Signature Length & 1,501 & 5,109 & 7,687 & 8,179 & 8,213 & 8,215 & 8,269 & 8,269 & 8,269 & 8,269 \\
\hline
\multicolumn{1}{c}{\multirow{8}[1]{*}{Twitter}} & Partition Count & 2     & 4     & 16    & 1,463 & 14,251,228 & 35,729,811 & 36,178,375 & 36,192,245 & 36,192,750 & 36,192,805 \\
\multicolumn{1}{c}{} & Preparation Time (s) & 4,980.11 & 4,221.10 & 4,226.36 & 4,310.65 & 4,290.77 & 4,577.37 & 4,554.91 & 4,446.23 & 4,410.29 & 4,422.27 \\
\multicolumn{1}{c}{} & Constructing Time (s) & 170.97 & 215.58 & 260.66 & 362.50 & 1,795.55 & 3,881.94 & 3,876.89 & 3,984.64 & 4,051.14 & 5,012.17 \\
\multicolumn{1}{c}{} & Table Read (byte) & 168,455,831,552 & 120,275,861,504 & 120,674,320,384 & 121,194,414,080 & 124,528,885,760 & 141,601,800,192 & 142,751,039,488 & 142,753,136,640 & 142,753,136,640 & 142,753,136,640 \\
\multicolumn{1}{c}{} & Table Write (byte) & 192,552,108,032 & 144,531,521,536 & 145,328,439,296 & 146,368,626,688 & 153,037,570,048 & 187,183,398,912 & 189,481,877,504 & 189,486,071,808 & 189,486,071,808 & 189,486,071,808 \\
\multicolumn{1}{c}{} & S Read (byte) & 8,192 & 0     & 0     & 0     & 33,206,579,200 & 130,105,450,496 & 116,607,520,768 & 137,478,197,248 & 151,362,093,056 & 162,154,037,248 \\
\multicolumn{1}{c}{} & S Write (byte) & 4,096 & 4,096 & 4,096 & 155,648 & 34,853,953,536 & 115,356,168,192 & 110,612,774,912 & 119,332,966,400 & 121,634,852,864 & 123,151,667,200 \\
\multicolumn{1}{c}{} & Max Signature Length & 3     & 5     & 9     & 33    & 1,373 & 4,354,479 & 5,840,263 & 5,848,053 & 5,848,119 & 5,848,119 \\
\hline
\multicolumn{1}{c}{\multirow{8}[2]{*}{SP2B}} & Partition Count & 728   & 219,581 & 459,986 & 467,369 & 467,369 & 467,369 & 467,369 & 467,369 & 467,369 & 467,369 \\
\multicolumn{1}{c}{} & Preparation Time (s) & 1,238.28 & 859.67 & 842.68 & 850.36 & 851.59 & 831.17 & 841.14 & 877.76 & 847.49 & 854.87 \\
\multicolumn{1}{c}{} & Constructing Time (s) & 1,392.42 & 1,670.65 & 1,824.11 & 1,929.69 & 2,066.06 & 2,152.50 & 2,265.44 & 2,248.45 & 2,226.88 & 2,337.82 \\
\multicolumn{1}{c}{} & Table Read (byte) & 105,736,306,688 & 68,232,937,472 & 68,232,937,472 & 68,232,937,472 & 68,232,937,472 & 68,232,937,472 & 68,232,937,472 & 68,232,937,472 & 68,232,937,472 & 68,232,937,472 \\
\multicolumn{1}{c}{} & Table Write (byte) & 120,431,050,752 & 82,931,875,840 & 82,931,875,840 & 82,931,875,840 & 82,931,875,840 & 82,931,875,840 & 82,931,875,840 & 82,931,875,840 & 82,931,875,840 & 82,931,875,840 \\
\multicolumn{1}{c}{} & S Read (byte) & 8,192 & 0     & 0     & 2,285,568 & 26,083,328 & 221,638,656 & 390,049,792 & 425,611,264 & 443,654,144 & 446,136,320 \\
\multicolumn{1}{c}{} & S Write (byte) & 118,784 & 62,963,712 & 97,890,304 & 136,470,528 & 196,829,184 & 387,956,736 & 495,534,080 & 514,424,832 & 523,583,488 & 523,796,480 \\
\multicolumn{1}{c}{} & Max Signature Length & 109   & 109   & 109   & 109   & 109   & 109   & 109   & 109   & 109   & 109 \\
\hline
\multicolumn{1}{c}{\multirow{8}[2]{*}{BSBM}} & Partition Count & 50    & 510   & 511   & 512   & 512   & 512   & 512   & 512   & 512   & 512 \\
\multicolumn{1}{c}{} & Preparation Time (s) & 38.54 & 28.52 & 28.14 & 27.88 & 28.07 & 27.92 & 27.96 & 27.79 & 28.03 & 27.86 \\
\multicolumn{1}{c}{} & Constructing Time (s) & 59.91 & 61.32 & 59.62 & 63.29 & 63.51 & 64.26 & 64.11 & 65.09 & 64.40 & 65.31 \\
\multicolumn{1}{c}{} & Table Read (byte) & 5,179,965,440 & 3,764,387,840 & 3,764,387,840 & 3,764,387,840 & 3,764,387,840 & 3,764,387,840 & 3,764,387,840 & 3,764,387,840 & 3,764,387,840 & 3,764,387,840 \\
\multicolumn{1}{c}{} & Table Write (byte) & 6,228,541,440 & 4,819,255,296 & 4,819,255,296 & 4,819,255,296 & 4,819,255,296 & 4,819,255,296 & 4,819,255,296 & 4,819,255,296 & 4,819,255,296 & 4,819,255,296 \\
\multicolumn{1}{c}{} & S Read (byte) & 8,192 & 0     & 0     & 0     & 0     & 0     & 0     & 0     & 0     & 0 \\
\multicolumn{1}{c}{} & S Write (byte) & 16,384 & 106,496 & 110,592 & 167,936 & 270,336 & 442,368 & 405,504 & 585,728 & 573,440 & 499,712 \\
\multicolumn{1}{c}{} & Max Signature Length & 35    & 37    & 37    & 37    & 37    & 37    & 37    & 37    & 37    & 37 \\
\hline
\end{tabular}%
}
\end{sidewaystable*}

\begin{table*}
\centering
\caption{Sum-up of experiment results from Table~\ref{table:big_table}}
\resizebox{\textwidth}{!}{
\begin{tabular}{rrrrrrrrr}
\hline
      & Jamendo & LinkedMDB & DBLP  & Wikilinks & Dbpedia & Twitter & SP2B  & BSBM \\
\hline
Partition count / Node Count & 0.064\% & 3.677\% & 0.049\% & 63.127\% & 8.935\% & 86.893\% & 0.166\% & 0.006\% \\
Elapsed Time (s) & 28.72712 & 193.748 & 1622.765 & 3618.1302 & 7537.846 & 68052.12 & 29009.04 & 921.5325 \\
Overall I/O (byte) & 1.6E+09 & 1.42E+10 & 1.33E+11 & 4.164E+11 & 4.34E+11 & 4.451E+12 & 1.59E+12 & 8.87E+10 \\
Max Signature Length / Node Count & 0.00473\% & 0.01043\% & 0.00324\% & 0.16503\% & 0.02141\% & 14.04035\% & 0.00004\% & 0.00042\% \\
\hline
\end{tabular}%
}
\end{table*}

%
%
%
%
%

\end{appendix}

}{
}

\end{document}